\theoremstyle{definition}
\DeclarePairedDelimiter\ceil{\lceil}{\rceil}
\DeclarePairedDelimiter\floor{\lfloor}{\rfloor}
\newtheorem{Thm}{Theorem}[section]
\newtheorem{Lemma}[Thm]{Lemma}
\newtheorem{Def}[Thm]{Definition}
\newtheorem{Prop}[Thm]{Proposition}
\newtheorem{Claim}[Thm]{Claim}
\newtheorem{Fact}[Thm]{Fact}
\numberwithin{equation}{section}
\newcommand{\N}{\mathbb{N}}
\newcommand{\Z}{\mathbb{Z}}
\newcommand{\Q}{\mathbb{Q}}
\newcommand{\R}{\mathbb{R}}
\newcommand{\C}{\mathbb{C}}
\newcommand{\F}{\mathbb{F}}
\newcommand{\eq}{\equiv}
\newcommand{\A}{\mathcal{A}}
\newcommand{\Cc}{\mathcal{C}}
\newcommand{\I}{\mathcal{I}}
\newcommand{\W}{\mathcal{W}}
\newcommand{\ab}{\bold{a}}
\newcommand{\bb}{\bold{b}}
\newcommand{\ib}{\bold{i}}
\newcommand{\xb}{\bold{x}}
\newcommand{\yb}{\bold{y}}
\newcommand{\zb}{\bold{z}}
\newcommand{\Mb}{\bold{M}}
\newcommand{\Zb}{\bold{Z}}
\newcommand{\Xb}{\bold{X}}
\newcommand{\Yb}{\bold{Y}}
\newcommand{\Ib}{\bold{I}}
\definecolor{LightBlue}{rgb}{0, 0.7, .7}
\definecolor{Green1}{rgb}{0.0, 0.5, 0.0}
\definecolor{green}{rgb}{0.0, 0.42, 0.24}
\definecolor{byzantine}{rgb}{0.74, 0.2, 0.64}
\algrenewcommand{\Return}{\State\algorithmicreturn~}
\newcommand{\ind}{\text{\color{white}.$\quad$}}
\newcommand{\scheme}[1]{\ensuremath{\text{#1}}}
\newcommand{\Alg}{\scheme{A}}  
\newcommand{\rs}{\scheme{RS}}  
\newcommand{\frs}{\scheme{FRS}}  
\newcommand{\Enc}{\scheme{Enc}}  
\newcommand{\Der}{\scheme{Der}}  
\def\legendre@dash#1#2{\hb@xt@#1{%
  \kern-#2\p@
  \cleaders\hbox{\kern.5\p@
    \vrule\@height.2\p@\@depth.2\p@\@width\p@
    \kern.5\p@}\hfil
  \kern-#2\p@
  }}
\def\@legendre#1#2#3#4#5{\mathopen{}\left(
  \sbox\z@{$\genfrac{}{}{0pt}{#1}{#3#4}{#3#5}$}%
  \dimen@=\wd\z@
  \kern-\p@\vcenter{\box0}\kern-\dimen@\vcenter{\legendre@dash\dimen@{#2}}\kern-\p@
  \right)\mathclose{}}
\newcommand\legendre[2]{\mathchoice
  {\@legendre{0}{1}{}{#1}{#2}}
  {\@legendre{1}{.5}{\vphantom{1}}{#1}{#2}}
  {\@legendre{2}{0}{\vphantom{1}}{#1}{#2}}
  {\@legendre{3}{0}{\vphantom{1}}{#1}{#2}}
}
\def\dlegendre{\@legendre{0}{1}{}}
\def\tlegendre{\@legendre{1}{0.5}{\vphantom{1}}}
\title{Beyond the Guruswami-Sudan (and Parvaresh–Vardy) Radii:
Folded Reed-Solomon, Multiplicity and
Derivative Codes}
\author{Neophytos Charalambides}
\date{December 10, 2019}
\begin{document}

\maketitle

\begin{abstract}
The classical family of Reed-Solomon codes consist of evaluations of polynomials over the finite field $\F_q$ of degree less than $k$, at $n$ distinct field elements. These are arguably the most widely used and studied codes, as they have both erasure and error-correction capabilities, among many others nice properties. In this survey we study closely related codes, folded Reed-Solomon codes, which are the first constructive codes to achieve the list decoding capacity. We then study two more codes which also have this feature, \textit{multiplicity codes} and \textit{derivative codes}. Our focus for the most part are the list decoding algorithms of these codes, though we also look into the local decodability of multiplicity codes.
\end{abstract}

\maketitle

\section{Introduction}
\label{Intro}

Communicating information is ubiquitous in modern technologies and every day interactions between people and corporations. Communication is achieved by encoding a message of length $k$ to a \textit{codeword} of length $n$ over alphabets, which is sent though a channel. The codeword may be corrupted in a subset of up to $\delta=pn$ symbols, for some $p\in(0,1)$. The purpose of encoding the original message is to have reliable communication over the channel, which means that the fraction of corrupted symbols may be restored; in order to retrieve the original message. A code $\Cc$ over an alphabet $\Sigma$ is a structured subset of $\Sigma^n$ for which such recoveries are possible, as long as there are no more than $pn$ corrupted symbols. The rate of $\Cc$ is defined as $R=\frac{\log|\Cc|}{n\log|\Sigma|}=\frac{k}{n}$.\\
$\ind$ A basic trade-off in this setting, is the one between rate $R$ and error fraction $p$; or equivalently between $R$ and the relative distance $\delta$. Clearly, $R\leq 1-p$. If we relax the decoding we require from unique, to listing a set of codewords which contain the correct codeword, this rate is asymptotically met. That is, there exist codes of rate $R=1-p-o(1)$ which are $p$-list-decodable. We refer to $1-R$ as the \textit{list decoding capacity}, which coincides with the fraction of errors we can correct, and is the optimal limit. Surprisingly, this is \textit{twice} the fraction of errors that one could decode when requiring unique decoding \cite{Gur07}! Though the above argument is non-constructive, \textit{folded Reed-Solomon codes} achieve list decoding from an error rate approaching $1-R$, with a polynomial time decoding algorithm \cite{GR08}. We present these codes in \cref{FRS_sec}, along with the original ideas and results from \cite{GR08}. We then describe two more completely different list decoding procedures for these codes, a linear-algebraic approach in \cref{LinAlg_sec}, and one based on Hensel-lifting \cref{HL_sec}.\\
$\ind$ In \cref{LDCs_Mult_codes_sec} we shift our focus to study another recent family of codes, \textit{multiplicity codes} \cite{KSY10}. Multiplicity error-correcting codes are locally decodable codes which have efficient local decoding algorithms, with rate approaching 1 and a low number of queries. They are based on evaluating multivariate polynomials and their derivatives. Finally, in \cref{Der_Codes_sec} we delve into a closely related family of codes, \textit{derivative codes}. These are simpler and more natural codes which relate to folded Reed-Solomon codes, putting together a lot of the ideas we will see throughout this survey.

\section{Folded Reed-Solomon Codes}
\label{FRS_sec}

Recall that a Reed-Solomon code $\rs_q[n,k]$ (RS) over $\F_q$, is the encoding of polynomials of degree at most $k-1$ which represents our message, over the \textit{defining set of points} $\A=\{\alpha_1,\cdots,\alpha_n\}\subset\F_q$
$$ \rs_q[n,k]=\Big\{\big[f(\alpha_1),f(\alpha_2),\cdots,f(\alpha_n)\big] \ \Big| \ f(X)\in\F_q[X] \text{ of degree }\leq k-1 \Big\} $$
Typically $n=|\F_q|-1=q-1$ and $\alpha_i=\alpha^{i}$ for all $i\in\N_n\coloneqq\{1,2,\cdots,n\}$, where $\alpha$ is some primitive element in $\F_q$. The encoding of the message $\vec{m}=\left[m_0,\cdots,m_{k-1}\right]\mapsto m(X)=\sum_{i=0}^{k-1}m_i X^i\in\F_q[X]$ is defined by the evaluation mapping
$$ \Enc(m)=\big[m(\alpha_1),m(\alpha_2),\cdots,m(\alpha_n)\big]\in\F_q. $$
$\ind$ One difficulty with RS codes is that we need to be able to correct \textit{any} pattern of $\floor{\frac{n-k+1}{2}}$. Guruswami and Rudra \cite{GR08} address this problem, by ``bundling'' parts of the codewords together, which considerably decreases the error pattern we have to handle. What they define as \textit{folded Reed-Solomon} codes (FRS), are in fact exactly RS codes, but viewed as a code over a larger alphabet by careful bundling of codeword symbols. Informally, an $m$-FRS code is a RS code over $\F_q$, where $m$ consecutive positions in the RS code are identified with an element in $\F_{q^m}$. That is, the columns of the encoded matrix (\ref{frs_eq}) may each be considered as an element in $\F_{q^m}$ --- \textbf{folding} a vector in $\F_q^m$ to an element in $\F_{q^m}$. We point out that the term \textit{folded Reed-Solomon} was first introduced in \cite{Kra03} to correct burst errors, though the folding operation is slightly different to what we are considering.

\begin{Def}
\textit{Consider $\Sigma=\F_q$, its nonzero elements $\{1,\gamma,\cdots,\gamma^{n-1}\}$, for $n=q-1$, and $\gamma$ a primitive element. Let $m$ be a positive factor of $n$; i.e. $n=N\cdot m$, and \textbf{degree parameter} $k\in\N_n$. The \textbf{$m$-folded Reed-Solomon code}} $\frs_q^{(m)}[k]$, \textit{is a code over $\F_q^m\cong\F_{q^m}$, that encodes a polynomial $f(X)\in\F_q[X]$ of degree $k-1$}
\begin{equation}
\label{frs_eq}
  {\footnotesize \Enc\left(f(X)\right) = \left( \begin{bmatrix} f(1) \\ f(\gamma) \\ \vdots \\ f(\gamma^{m-1}) \end{bmatrix}, \begin{bmatrix} f(\gamma^{m}) \\ f(\gamma^{m+1}) \\ \vdots \\ f(\gamma^{2m-1}) \end{bmatrix}, \cdots, \begin{bmatrix} f(\gamma^{n-m}) \\ f(\gamma^{n-m+1}) \\ \vdots \\ f(\gamma^{n-1}) \end{bmatrix} \right) \cong \begin{bmatrix} f(1) & \cdots & f(\gamma^{n-m}) \\ f(\gamma) & \cdots & f(\gamma^{n-m+1}) \\ \vdots & \ddots & \vdots \\ f(\gamma^{m-1}) & \cdots & f(\gamma^{n-1}) \end{bmatrix}}.
\end{equation}
\end{Def}

\begin{Prop}
\textit{The} FRS \textit{(nonlinear)
code over $\F_q^m$ defined above, has block length $N$, rate $R=\frac{k}{n}=\frac{k}{Nm}$, and minimum distance} $d_\text{min}=N-\ceil{k/m}+1\simeq (1-R)N$.
\end{Prop}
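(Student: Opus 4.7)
The plan is to read off the block length and rate directly from the definition and do real work only for the distance. The block length is immediate from (\ref{frs_eq}): the encoding outputs an $N$-tuple of elements of $\F_q^m\cong\F_{q^m}$, so the code sits in $(\F_{q^m})^N$. For the rate, I would first observe that the map $f\mapsto\Enc(f)$ is injective, since two polynomials of degree at most $k-1$ agreeing on the $n\geq k$ distinct points $1,\gamma,\ldots,\gamma^{n-1}$ must coincide; hence $|\frs_q^{(m)}[k]|=q^k$, and $R=\log_{q^m}(q^k)/N=k/(mN)=k/n$.

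For the minimum distance I would first prove the lower bound $d_\text{min}\geq N-\lceil k/m\rceil+1$ and then exhibit a pair of codewords attaining it. Take two distinct codewords $\Enc(f),\Enc(g)$; the difference $h\coloneqq f-g$ is a nonzero polynomial of degree at most $k-1$, and thus has at most $k-1$ roots in $\F_q$. The $j$-th folded symbols of $\Enc(f)$ and $\Enc(g)$ agree iff $h$ vanishes simultaneously at all $m$ points $\gamma^{jm},\gamma^{jm+1},\ldots,\gamma^{jm+m-1}$ of that column. If $t$ of the $N$ columns agree, then $h$ has at least $tm$ distinct roots, whence $tm\leq k-1$ and $t\leq\lfloor(k-1)/m\rfloor$. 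A one-line case analysis on $k\bmod m$ yields $\lfloor(k-1)/m\rfloor=\lceil k/m\rceil-1$, which gives the claimed lower bound on the folded Hamming distance.

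To verify tightness, set $t=\lceil k/m\rceil-1$ (so $tm\leq k-1$) and consider $h(X)=\prod_{i=0}^{tm-1}(X-\gamma^i)$, a nonzero polynomial of degree $tm\leq k-1$ vanishing on the first $t$ folded columns. Taking $f=h$ and $g=0$ produces a codeword pair of folded Hamming distance exactly $N-t=N-\lceil k/m\rceil+1$, saturating the bound. The asymptotic $d_\text{min}\simeq(1-R)N$ is then a rearrangement: $\lceil k/m\rceil/N$ differs from $k/(mN)=R$ by at most $1/N\to 0$ as $N\to\infty$ with $m$ fixed. No part of the argument is genuinely hard --- the only mildly fiddly step is the floor/ceiling identity --- and the key conceptual move is simply recognizing that agreement in a full folded column corresponds to $m$ simultaneous roots of the underlying difference polynomial, so that the standard Singleton-type root-count argument for $\rs$ codes transfers to the folded setting at the cost of a factor of $m$ inside the floor.
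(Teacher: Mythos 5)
The paper states this proposition without proof, so there is no author argument to compare against; I will evaluate your proof on its own.

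Your argument is correct, and it is the natural one. Block length and rate are read off directly from the definition (with the observation that the encoding is injective because a degree-$\leq k{-}1$ polynomial is determined by its values at $k\leq n$ distinct points, giving $|\Cc|=q^k$). For the distance, the key move --- that a folded symbol agrees iff the difference polynomial $h=f-g$ vanishes at all $m$ points of that column, so that $t$ agreeing columns force $tm\leq\deg h\leq k-1$ --- is exactly the Singleton-style root count transferred through the folding, and your conversion $\lfloor(k-1)/m\rfloor=\lceil k/m\rceil-1$ is correct in both cases ($m\mid k$ and $m\nmid k$). The tightness witness $h(X)=\prod_{i=0}^{tm-1}(X-\gamma^i)$ with $t=\lceil k/m\rceil-1$ has degree $tm\leq k-1$, vanishes on precisely the first $t$ columns and nowhere else (its root set is exactly $\{\gamma^0,\dots,\gamma^{tm-1}\}$), so the resulting codeword pair realizes distance exactly $N-t$. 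The asymptotic follows as you say since $\lceil k/m\rceil/N$ differs from $R=k/(mN)$ by at most $1/N$. One small remark worth making explicit: although the code is nonlinear over $\F_{q^m}$ (as the paper flags), it is $\F_q$-linear, which is what licenses passing from an arbitrary pair $(f,g)$ to the difference $h=f-g$ and reducing distance to weight; you use this implicitly and it holds.
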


$\ind$ Suppose a FRS codeword was transmitted, and a received (potentially corrupted) string
\begin{equation}
\label{rec_matrix_cw} 
  \yb = \begin{pmatrix} y_{11} & y_{12} & \cdots & y_{1n} \\ y_{21} & y_{22} & \cdots & y_{2N} \\ \vdots & \vdots & \ddots & \vdots \\ y_{m1} & y_{m2} & \cdots & y_{mN} \end{pmatrix} \in \F_q^{m\times N}
\end{equation}
$\yb\in(\F_q^m)^N$ was received, which we view as a matrix in $\F_q^{m\times N}$. The goal is to recover a list of all polynomials in $\F_q[X]$ of degree at most $k-1$, whose encoding (\ref{frs_eq}) agrees with $\yb$ in at least $t$ columns, for some \textbf{agreement parameter} $t$. Ideally, we would like $t$ to be as small as possible, as this corresponds to list decoding up to $n-t$ errors. We know how to list decode a RS code up to $1-\sqrt{R}$ \cite{GS98} in $\text{poly}(n,q)$ time, so by simply unfolding and treating it as a regular RS codes, we can solve this for $t\geq \sqrt{R}N$. A crucial difference in FRS is its additional structure: when a column is correct, we know the correct values of \textit{all} $m$ values in the column.\\
$\ind$ Decoding these codes is similar in spirit to list decoding of RS. The gain comes from interpolating in more than two dimensions. That is, we seek a higher dimensional analog of the identity $Q(X,f(X))=0$ from the RS case, a low-degree nonzero polynomial which is interpolated through the data. The essence is to argue that this identity suffices to retrieve a small list of possibilities efficiently. The main steps in this higher dimensional version of the Berlekamp-Welch algorithm \cite{BW86}, are \textit{interpolation} and \textit{root-finding}. We now present a fundamental result of \cite{GR08}.

\begin{Thm}[\cite{Gur11}]
\label{list_dec_thm_GR08}
\textit{For every integer $s,1\leq s\leq m$ and any constant $\delta>0$, there is a list decoding algorithm for} $\frs_q^{(m)}[n,k]$ \textit{that list decodes from up to $e$ errors, as long as}
$$ e\leq N-(1+\delta)\frac{\big(k^sN(m-s+1)\big)^{1/(s+1)}}{m-s+1} $$
\textit{where $N=\frac{n}{m}$ is  the code block length. The algorithm runs in $\big(O_\delta(q)\big)^{O(s)}$ time, and outputs a list of size at most $q^{s-1}$}. 
\end{Thm}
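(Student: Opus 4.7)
The plan is to execute the higher-dimensional analog of the Berlekamp-Welch/Guruswami-Sudan recipe anticipated in the paragraph preceding the theorem, following Guruswami-Rudra: build an $(s+1)$-variate interpolating polynomial $Q$, observe that every sufficiently close message polynomial induces a low-degree univariate identity that must be identically zero, and extract the list by a structured root-finding step.

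First I would peel $m-s+1$ overlapping $s$-tuples off each bundled column of the received word $\yb$ in (\ref{rec_matrix_cw}). For each column $j \in \{1,\ldots,N\}$ (which bundles the evaluation points $\gamma^{(j-1)m},\ldots,\gamma^{(j-1)m+m-1}$) and each shift $i \in \{0,\ldots,m-s\}$, form the point
$$\bigl(\gamma^{(j-1)m+i},\, y_{i+1,j},\, y_{i+2,j},\, \ldots,\, y_{i+s,j}\bigr) \in \F_q^{s+1},$$
giving $T := N(m-s+1)$ interpolation points. I would then seek a nonzero $Q(X,Y_1,\ldots,Y_s) \in \F_q[X,Y_1,\ldots,Y_s]$ vanishing on all of them and of $(1,k-1,\ldots,k-1)$-weighted degree at most $D$. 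The number of monomials in that weighted-degree class is at least $\tfrac{D^{s+1}}{(s+1)!(k-1)^s}$, so taking $D$ slightly above $\bigl((s+1)!\,(k-1)^s\,N(m-s+1)\bigr)^{1/(s+1)}$ makes the associated homogeneous linear system under-determined, and a nonzero $Q$ is produced by Gaussian elimination.

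For correctness, if $f \in \F_q[X]$ has $\deg f \le k-1$ and its FRS encoding agrees with $\yb$ on at least $t$ columns, then on each such column $j$ and each shift $i$ the tuple $(\gamma^{(j-1)m+i}, f(\gamma^{(j-1)m+i}), \ldots, f(\gamma^{(j-1)m+i+s-1}))$ is among the interpolation points, so the univariate polynomial
$$R(X) := Q\bigl(X,\, f(X),\, f(\gamma X),\, \ldots,\, f(\gamma^{s-1}X)\bigr) \in \F_q[X]$$
vanishes at $t(m-s+1)$ distinct field elements. The choice of weights guarantees $\deg R \le D$, so imposing $D < t(m-s+1)$ forces $R \equiv 0$. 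Combining with the interpolation bound gives the condition $t(m-s+1) > (1+\delta)\bigl(k^s N(m-s+1)\bigr)^{1/(s+1)}$ (the $(1+\delta)$ absorbing both the $(s+1)!$ factor and the $k$ versus $k-1$ discrepancy), which via $e = N-t$ rearranges to the claimed error bound.

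The main obstacle I expect is extracting the list from the single identity $R \equiv 0$, because a direct univariate factorization as in Guruswami-Sudan is no longer available once $Q$ has $s+1$ variables. Here I would invoke the linear-algebraic machinery of \cref{LinAlg_sec}: after a normalization that guarantees a distinguished coefficient of $Q$ is nonzero, the identity $R \equiv 0$ translates into a triangular linear recursion on the coefficients of $f$, so the admissible $f$'s form an affine subspace of $\F_q[X]_{<k}$ of dimension at most $s-1$. This yields the list-size bound $q^{s-1}$, and since every step (interpolation, solving the triangular system, and enumerating the resulting affine subspace) reduces to linear algebra over $\F_q$, the total running time is $\bigl(O_\delta(q)\bigr)^{O(s)}$.
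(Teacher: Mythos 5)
Your proposal mixes two ingredients that are mutually incompatible, and that incompatibility is the central gap. In the interpolation step you allow $Q$ to have arbitrary degree in $Y_1,\dots,Y_s$ --- you only cap the $(1,k-1,\dots,k-1)$-weighted degree, which is exactly what lets you count $\sim D^{s+1}/\bigl((s+1)!(k-1)^s\bigr)$ monomials and aim at the geometric-mean error bound. But in the root-finding step you then appeal to the linear-algebraic machinery of \cref{LinAlg_sec}, which requires $Q$ to have degree \emph{one} in each $Y_i$: only then is $\Lambda(X)=A_0(X)+\sum_i A_i(X)f(\gamma^{i-1}X)$ a linear expression in $(f_0,\dots,f_{k-1})$, and only then does the coefficient of $X^r$ have the triangular form exploited in Lemma \ref{aff_subsp_lem}. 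With a general weighted-degree $Q$, the identity $R(X)\equiv 0$ contains products such as $f(X)f(\gamma X)$, the system in the $f_i$'s is nonlinear, the solution set is not an affine subspace, and the list bound $q^{s-1}$ does not follow. The paper flags this explicitly in \cref{LinAlg_sec}: the $\tilde{\F}$-identity ``seems to be the only known way to bound the list-size when higher degrees are used in the interpolation.'' Accordingly, the paper's own route in \cref{int_step_subsec_frs}--\cref{root_find_subsec} differs from yours at both steps: it interpolates a $Q$ that is \emph{linear} in the $Y_i$'s, with $\deg A_i\le d$, $\deg A_0\le d+k-1$ and $d$ as in (\ref{d_express}), and then recovers the candidate $f$'s not via linear algebra but by passing to $\tilde{\F}=\F_q[X]/(X^{q-1}-\gamma)$, using Lemma \ref{alg_lemma} to replace $f(\gamma^j X)$ by $f(X)^{q^j}$, and root-finding the resulting univariate $R_s(Y_1)$ by Berlekamp's algorithm.

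There is a secondary quantitative problem on the interpolation side as well. The factor $(s+1)!^{1/(s+1)}$ that you propose to ``absorb'' into $(1+\delta)$ grows like $(s+1)/e$, so no constant $\delta$ can swallow it uniformly in $s$. Reaching the stated radius with a weighted-degree interpolation requires imposing vanishing with \emph{multiplicity} $r$ at each interpolation point (as Guruswami--Sudan and the original GR08 argument do); letting $r$ grow is what pushes the spurious factor to $1$, and your sketch omits multiplicity entirely. So even if you replaced your root-finding step with the $\tilde{\F}$ argument to repair the first gap, the interpolation count as written would still not reach the bound in the theorem.
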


$\ind$ The fraction of errors corrected by this algorithm as a function of the rate $R$ is
\begin{equation}
\label{fract_corr_errs}
  1-(1+\delta)\left(\frac{mR}{m-s+1}\right)^{s/(s+1)}\overset{\natural}{\simeq}1-(1-\varepsilon)\left(\frac{R}{1-\varepsilon+\varepsilon^2}\right)^{1/(1+\varepsilon)}\geq 1-R-\varepsilon
\end{equation}
where in $\natural$ we pick $\delta\simeq\varepsilon$, $s\simeq1/\varepsilon$ and $m\simeq s^2$. The decoding complexity and list-size are $\simeq q^{O(1/\varepsilon)}$.\\ $\ind$ Another important thing to note here is that we consider a fraction $\simeq 1-\left(\frac{mR}{m-s+1}\right)^{s/(s+1)}$ of errors, where for large enough $m$ we get $\simeq 1-\sqrt[s+1]{R^s}$. The second term is precisely the geometric mean of $1,R,\cdots,R$. This is analogous to the improvement achieved in RS codes $(s=1)$, where the agreement required between the received string and codeword was reduced from $\frac{1+R}{2}$ (arithmetic mean) to $\sqrt{R}$ (geometric mean). The corresponding radius was consequently improved from $\frac{1-R}{2}$ to $1-\sqrt{R}$, which is always better by the AM-GM inequality.\\
$\ind$ The main gain of the \textit{folding operation} on RS codes, is that we can construct list decodable FRS codes up to radius roughly $1-\sqrt[s+1]{R^s}$, for any $s\in\Z_+$. By selecting $s$ large enough, we can get within any desired $\varepsilon$ from capacity, attaining list decodability up to fraction $1-R-\varepsilon$ of errors. Moreover, list  decoding capacity was achieved over large alphabets \cite{GR08}, as $\lim_{s\to\infty}\left\{1-\sqrt[s+1]{R^s}\right\}=1-R$, though there was room for improvement with respect to some of the parameters.

\subsection{Interpolation step}
\label{int_step_subsec_frs}

The list decoding algorithm (\cite{Gur10},\cite{Vad12},\cite{GRS19} --- modified version of the original algorithm) first interpolates a linear polynomial $Q(X,\Yb)=Q(X,Y_1,\cdots,Y_s)$ of degree 1 in the $Y_i$'s through certain $(s+1)$-tuples, where $\Yb$ denotes the formal variables $Y_1,\cdots,Y_s$. Given a $\yb\in\F_q^{m\times N}$, we interpolate a \textit{nonzero} polynomial
\begin{equation}
\label{Q_interp}
  Q(X,\Yb) = A_0(X)+A_1(X)Y_1+\cdots+A_s(X)Y_s = A_0(X)+\sum\limits_{i=1}^sA_i(X)Y_i
\end{equation}
where deg$(A_i)\leq d$ for all $i\in\N_s$ and deg$(A_0)\leq d+k-1$, for a suitable degree parameter. The total number of monomials which appear in $Q$ with these restrictions is
$$ (d+1)s+d+k=(d+1)(s+1)+k-1\geq N(m-s+1)+s+1>N(m-s+1) $$
for $d$ chosen to be
\begin{equation}
\label{d_express}
  d=\left\lfloor\frac{N(m-s+1)-k+1}{s+1}\right\rfloor
\end{equation}
and $Q\in\F_q[X][\Yb]\cong\F_q[X,\Yb]=\F_q[X,Y_1,\cdots,Y_s]$ must satisfy the interpolation step
\begin{equation}
\label{int_cond_frs}
  Q\left(\gamma^{im+j},y_{im+j},y_{im+(j+1)},\cdots,y_{im+(j+s-1)}\right)=0
\end{equation}
for all $i=0,1,\cdots,N-1, \ j=0,\cdots,m-s$, which may be viewed as $N(m+s-1)$ constraints. Since we have more monomials than constraints, such a nonzero polynomial $Q$ exists, which can be found by solving a homogeneous linear system. This explains our choice of $d$. The following lemma gives a necessary algebraic condition which message polynomials $f(X)$ in our desired list must satisfy.

\begin{Lemma}
\label{alg_cond_msg}
\textit{If $f(X)\in\F_q[X]$ is a polynomial of degree at most $k-1$ whose} FRS \textit{encoding} (\ref{frs_eq}) \textit{agrees with $\yb$ in at least $t$ columns for $t\geq d+k$, and $s=m$, then}
\begin{equation}
\label{s_list_cond}
  Q(X,f(X),f(\gamma X),\cdots,f(\gamma^{s-1} X))=0.
\end{equation}
\textit{We refer to the polynomials $f(X)\in\F_q[X]$ satisfying (\ref{s_list_cond}) as \textbf{$\Yb$-root of $Q$}}.
\end{Lemma}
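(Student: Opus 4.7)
The plan is to mimic the classical Berlekamp–Welch / Guruswami–Sudan root-finding argument one dimension up, by reducing to the fact that a univariate polynomial of bounded degree cannot have too many zeros.

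First, I would consider the univariate specialization
\[
  R(X) \coloneqq Q\bigl(X,\,f(X),\,f(\gamma X),\,\ldots,\,f(\gamma^{s-1}X)\bigr)\in\F_q[X]
\]
and bound its degree. Since each $A_i(X)f(\gamma^{i-1}X)$ for $i\ge 1$ has degree at most $d+(k-1)$, and $\deg A_0\le d+k-1$ by construction, we get $\deg R\le d+k-1$. Thus if we can exhibit at least $d+k$ distinct roots of $R$, we are forced to conclude $R\equiv 0$, which is exactly (\ref{s_list_cond}).

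Next I would produce those roots from the columns on which $\yb$ agrees with the encoding of $f$. Fix a column $i\in\{0,1,\ldots,N-1\}$ on which $\yb$ matches $\Enc(f)$; under the folding in (\ref{frs_eq}) this means
\[
  y_{im+j}=f(\gamma^{im+j})\qquad \text{for all }j=0,1,\ldots,m-1.
\]
Specializing to $s=m$, the interpolation condition (\ref{int_cond_frs}) collapses to the single constraint at $j=0$, namely
\[
  Q\bigl(\gamma^{im},\,y_{im},\,y_{im+1},\,\ldots,\,y_{im+s-1}\bigr)=0.
\]
Substituting the agreement relations on this column, this is exactly
\[
  Q\bigl(\gamma^{im},\,f(\gamma^{im}),\,f(\gamma^{im+1}),\,\ldots,\,f(\gamma^{im+s-1})\bigr)=R(\gamma^{im})=0,
\]
where I used $f(\gamma^{im+\ell})=f(\gamma^\ell\cdot\gamma^{im})$, matching the arguments appearing in $R$.

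Finally, assembling these observations: for each of the (at least) $t$ agreeing columns we obtain one distinct root $\gamma^{im}$ of $R(X)$, and distinct columns give distinct powers of $\gamma$ since $\gamma$ is primitive and $im<n$. Hence $R$ has at least $t\ge d+k$ roots while $\deg R\le d+k-1$, forcing $R\equiv 0$, which is the desired $\Yb$-root identity. The only place care is really needed is in checking the degree bound on $R$ (where the boost $\deg A_0\le d+k-1$ compared to $\deg A_i\le d$ for $i\ge 1$ is essential to even make $R$ nontrivial), and in observing that the hypothesis $s=m$ is what lets each correct column supply a usable evaluation; without $s=m$ one would have to pick a specific $j$ and the correspondence between $R(\gamma^{im})$ and the interpolation constraint would break.
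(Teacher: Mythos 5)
Your proof is correct and follows essentially the same route as the paper: define $R(X)=Q(X,f(X),\ldots,f(\gamma^{s-1}X))$, bound $\deg R\le d+k-1$ from the degree constraints on the $A_i$, use the $s=m$ specialization of the interpolation constraint together with column agreement to produce the root $R(\gamma^{im})=0$ for each good column, and conclude $R\equiv 0$ by counting zeros against the degree. The only additions you make are the explicit justifications (why the degree bound holds, why the roots are distinct, why $s=m$ is needed), all of which the paper leaves implicit.
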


\begin{proof}
Define $R(X)\coloneqq Q(X,f(X),f(\gamma X),\cdots,f(\gamma^{s-1} X))$, for which deg$(R)\leq d+k-1$. If $\Enc(f(X))$ agrees with $\yb$ in the $i^{th}$ column; then $f(\gamma^{im+\iota})=y_{im+1+\iota}$, for $\iota\in\{0,1,\cdots,m-1\}$ and $i$'s as defined for (\ref{int_cond_frs}). Together with condition (\ref{int_cond_frs}) the assumption that $m=s$, this implies that
$$ R(\gamma^{is})=Q(\gamma^{is},f(\gamma^{is}),f(\gamma^{is+1}),\cdots,f(\gamma^{is+s-1}))=0. $$
It follows that $R(X)$ has at least $t\geq d+k$ zeros. Since deg$(R)\leq d+k-1$, by the fundamental theorem of algebra $R(X)\equiv0$.
\end{proof}

All in all, lemma \ref{alg_cond_msg} provides the correctness of the procedure we are describing.

\subsection{Root-finding step}
\label{root_find_subsec}

The second step of our decoding algorithm is an $(s+1)$-variate ``root-type'' problem:
\begin{changemargin}{0.6cm}{0.6cm}
  Given $Q(X,\Yb)\not\eq0$ with coefficients in $\F_q$, $\gamma\in\F_q$ a primitive element, and parameter $k<n=q-1$, find the list of all polynomials $f(X)$ of degree at most $k-1$ such that $Q(X,f(X),f(\gamma X),\cdots,f(\gamma^{s-1}X))=0$.
\end{changemargin}
The following algebraic lemma is an important step to solving this problem.

\begin{Lemma}
\label{alg_lemma}
\textit{For $\gamma\in\F_q$ a primitive element, we have}:
\begin{enumerate}
  \item \textit{The polynomial $E(X)\coloneqq X^{q-1}-\gamma$ is irreducible over $\F_q$}
  \item \textit{If} deg$(f)<q-1$, \textit{then $f(\gamma X)\equiv f(X)^q\bmod(X^{q-1}-\gamma)$}.
\end{enumerate}
\end{Lemma}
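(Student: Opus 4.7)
The two parts are essentially independent, so I would attack them separately. For part (1), my plan is to analyze the orbit of a root of $E(X)$ under the Frobenius automorphism $\phi:x\mapsto x^q$ of $\overline{\F_q}$. Fix any root $\beta\in\overline{\F_q}$, so that $\beta^{q-1}=\gamma$. The key computation is
\[
\phi(\beta)=\beta^q=\beta\cdot \beta^{q-1}=\gamma\beta,
\]
and iterating (using that $\gamma\in\F_q$ is fixed by $\phi$) gives $\phi^k(\beta)=\gamma^k\beta$ for every $k\geq 0$. Since $\gamma$ is a primitive element, $\gamma^k=1$ iff $(q-1)\mid k$, so the minimal $k$ with $\phi^k(\beta)=\beta$ is exactly $q-1$. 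Hence $[\F_q(\beta):\F_q]=q-1$, and as $E(X)$ is a monic polynomial of degree $q-1$ vanishing at $\beta$, it must be the minimal polynomial of $\beta$ over $\F_q$, which is irreducible.

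For part (2), the plan is a direct congruence computation, exploiting two facts: coefficients of $f$ lie in $\F_q$ (so Fermat applies), and $X^q\equiv \gamma X\pmod{X^{q-1}-\gamma}$. Writing $f(X)=\sum_{i=0}^{k-1}a_iX^i$ with $a_i\in\F_q$ and $k-1<q-1$, the Frobenius acts on polynomials by
\[
f(X)^q=\sum_{i=0}^{k-1}a_i^q X^{iq}=\sum_{i=0}^{k-1}a_i(X^q)^i.
\]
Reducing modulo $E(X)=X^{q-1}-\gamma$, each $X^q$ becomes $\gamma X$, so $(X^q)^i\equiv(\gamma X)^i=\gamma^iX^i$, and
\[
f(X)^q\equiv \sum_{i=0}^{k-1}a_i\gamma^iX^i=f(\gamma X)\pmod{X^{q-1}-\gamma}.
\]
The hypothesis $\deg f<q-1$ is used to ensure that $f(\gamma X)$ itself has degree $<q-1$ and is therefore the canonical reduced representative of the residue class, so the congruence really identifies $f(\gamma X)$ (rather than just some reduction of $f(X)^q$).

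The potential obstacle lies in part (1): one must be careful that the obvious computation actually yields the degree and not merely a divisor of it. The clean way to see it is exactly the Frobenius-orbit argument above, which immediately pins down the degree as $q-1$ rather than forcing me to appeal to the Lidl–Niederreiter classification of irreducibility of binomials $X^n-a$. Part (2) is essentially a one-line Frobenius computation once one notices that $X^q\equiv \gamma X$ modulo $E(X)$.
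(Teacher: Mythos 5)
Your proof is correct; the paper itself states this lemma without proof (it is a survey), so there is no in-paper argument to compare against. Both parts are handled cleanly: the Frobenius-orbit computation $\phi^k(\beta)=\gamma^k\beta$ pins the degree of the minimal polynomial of $\beta$ at exactly $q-1$ (since $\beta\neq 0$ and $\gamma$ has order $q-1$), forcing $E(X)$ to be that minimal polynomial; and the characteristic-$p$ expansion $f(X)^q=\sum a_i^q X^{iq}=\sum a_i(X^q)^i$ together with $X^q\equiv\gamma X\pmod{E(X)}$ gives part (2) in one line. One small clarification worth noting: the congruence in part (2) in fact holds for all $f\in\F_q[X]$ regardless of degree; the hypothesis $\deg f<q-1$ only guarantees, as you observe, that $f(\gamma X)$ is the canonical reduced representative in $\F_q[X]/(E(X))$, which is what the downstream root-finding argument in the paper actually relies on.
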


$\ind$ We now present how to list the polynomials $f(X)\in\F_q[X]$ of degree $\leq k-1$ for the trivariate case ($s=2$), to satisfy the condition $Q(X,f(X),f(\gamma X))\equiv0$. We then discuss how this is can be generalized to $s\geq3$.

\begin{Thm}
\label{root_find_triv}
\textit{Consider the finite field $\F_q$ with a primitive element $\gamma$, and $Q(X,Y_1,Y_2)\in\F_q[X,Y_1,Y_2]$ nonzero with} $\text{deg}_{Y1}(Q)\leq q-1$, \textit{along with an integer parameter $k<q$. There is a deterministic algorithm with runtime $poly(q)$, which outputs the list of all $f(X)$ of degree at most $k-1$, satisfying $Q(X,f(X),f(\gamma X))\equiv0$}.
\end{Thm}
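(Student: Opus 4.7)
The plan is to use Lemma~\ref{alg_lemma} to convert the functional constraint $Q(X, f(X), f(\gamma X)) \equiv 0$ into a univariate root-finding problem over the field $K := \F_q[X]/(E(X))$, which by part~(1) of the lemma is isomorphic to $\F_{q^{q-1}}$.

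The first step is to pass to $K$. Since $\deg f \leq k-1 < q-1$, part~(2) of Lemma~\ref{alg_lemma} gives $f(\gamma X) \equiv f(X)^q \pmod{E(X)}$, so every candidate $f$ satisfies
$$ Q\bigl(X, f(X), f(X)^q\bigr) \equiv 0 \pmod{E(X)}. $$
Define the univariate polynomial
$$ T(Y) \;:=\; Q(X, Y, Y^q) \bmod E(X) \;\in\; K[Y], $$
by reducing every $X$-coefficient of $Q(X, Y, Y^q)$ modulo $E(X)$. Then every $\Yb$-root $f(X) \in \F_q[X]$ of degree $\leq k-1$ descends, via the canonical injection of polynomials of degree $<q-1$ into $K$, to a root of $T$ in $K$. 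So it suffices to enumerate the roots of $T$ in $K$, lift each to its unique polynomial representative of degree $<q-1$, discard those of degree exceeding $k-1$, and verify the rest by direct substitution into $Q(X,\cdot,\cdot)$.

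The key technical check is that $T$ is not the zero polynomial of $K[Y]$. Write $Q(X, Y_1, Y_2) = \sum_{i,j} Q_{ij}(X)\, Y_1^i Y_2^j$ with $0 \leq i \leq \deg_{Y_1}(Q) \leq q-1$. Substitution yields
$$ T(Y) \;=\; \sum_{i,j} \overline{Q_{ij}(X)}\, Y^{\,i+jq}, $$
where the overline denotes reduction modulo $E(X)$. The exponents $i+jq$ are pairwise distinct because $0 \leq i \leq q-1$ gives unique base-$q$ representations, so $T \equiv 0$ in $K[Y]$ would force $E(X) \mid Q_{ij}(X)$ for every pair $(i,j)$, i.e.\ $E(X) \mid Q$. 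After first dividing out from $Q$ the largest power of $E(X)$ that divides it as a polynomial in $\F_q[X,Y_1,Y_2]$ --- which preserves the set of $\Yb$-roots, since $\F_q[X]$ is a domain --- we may assume $E \nmid Q$, and hence $T \not\equiv 0$.

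Finally, I would run a standard deterministic univariate root-finding procedure on $T(Y)$ over $K$: compute $\gcd\bigl(T(Y),\, Y^{|K|} - Y\bigr)$ to isolate the $K$-rational roots, then split the resulting squarefree product of linear factors via Berlekamp's algorithm. Elements of $K = \F_{q^{q-1}}$ are represented as polynomials of degree $<q-1$ over $\F_q$, so arithmetic in $K$ costs $\mathrm{poly}(q)$; together with $\deg_Y T = O(q \cdot \deg Q)$, this yields overall running time $\mathrm{poly}(q)$. The main obstacle in executing the plan is precisely the nonvanishing of $T$, which crucially relies on the hypothesis $\deg_{Y_1}(Q) \leq q-1$: without it, different monomials of $Q$ could collapse onto the same power of $Y$ after substitution and cancel, potentially trivializing the reduction and yielding no useful constraint on $f$.
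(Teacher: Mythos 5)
Your proposal is correct and follows essentially the same route as the paper: factor out the largest power of $E(X)=X^{q-1}-\gamma$ from $Q$, reduce coefficients modulo $E(X)$ to land in $\tilde{\F}\cong\F_{q^{q-1}}$, use part~(2) of Lemma~\ref{alg_lemma} to replace $f(\gamma X)$ by $f(X)^q$, and then find roots of the resulting univariate polynomial over $\tilde{\F}$ by a deterministic (Berlekamp-style) root finder, pruning the output at the end. The one place you diverge is the nonvanishing argument for $T$: the paper observes that $T(Y_1,Y_1^q)\equiv 0$ would force $(Y_2-Y_1^q)\mid T(Y_1,Y_2)$, which is ruled out by $\deg_{Y_1}(T)<q$, whereas you argue more directly that the exponents $i+jq$ are pairwise distinct when $0\le i\le q-1$, so no cancellation can occur after substitution; both arguments hinge on the same hypothesis $\deg_{Y_1}(Q)\le q-1$, and yours is arguably the more transparent of the two. (One minor point worth flagging: computing $\gcd\bigl(T(Y),\,Y^{|K|}-Y\bigr)$ with $|K|=q^{q-1}$ must be done via repeated squaring of $Y$ modulo $T(Y)$, not by writing out $Y^{|K|}-Y$, in order to stay within $\mathrm{poly}(q)$ time; this is standard but should be stated.)
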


\begin{proof}
We know by lemma \ref{alg_lemma} part 1 that $E(X)$ is irreducible. For $b\in\N_0$ such that $E(X)^b\parallel Q(X,Y_1,Y_2)$; i.e. $E(X)^{b+1}\nmid Q(X,Y_1,Y_2)$ while $E(X)^b\mid Q(X,Y_1,Y_2)$, factor out $E(X)^b$ to obtain $Q_0(X,Y_1,Y_2)=E(X)^{-b}\cdot Q(X,Y_1,Y_2)$. It is clear that $E(X)\nmid Q_0(X,Y_1,Y_2)$ , and that if $Q(X,f(X),f(\gamma X))=0$; then $Q_0(X,f(X),f(\gamma X))=0$.\\
$\ind$ We may therefore focus on $Q_0$ instead, which we view as a polynomial $T_0(Y_1,Y_2)\in\F_q[X][Y_1,Y_2]$, for which we reduce the coefficients of modulo $E(X)$ to get $T(Y_1,Y_2)\in\tilde{\F}[Y_1,Y_2]$, for $\tilde{\F}\coloneqq\F_q[X]/(E(X))\cong\F_{q^{q-1}}$. That is, the bivariate polynomial $T(Y_1,Y_2)$ is over the extension field $\tilde{\F}$. Further note that $T(Y_1,Y_2)\not\eq0$, since $E(X)\nmid Q_0(X,Y_1,Y_2)$.\\
$\ind$ From the second part of our lemma, it suffices to find all polynomials $f(X)$ of degree $\leq k-1$ satisfying $Q_0(X,f(X),f(X)^q)\equiv0\bmod E(X)$; i.e. $E(X)\mid Q_0(X,f(X),f(X)^q)$. This reduces to finding the elements $\Gamma\in\tilde{\F}$ satisfying $T(\Gamma,\Gamma^q)=0$. For the univariate polynomial $R_2(Y_1)\coloneqq T(Y_1,Y_1^q)$, this corresponds to finding its roots in $\tilde{\F}$ --- $R_2(\Gamma)=0$. To recap, $R_2(\Gamma)=0$ for $\Gamma\in\tilde{\F}$ has a correspondence with the coefficients of $f(x)\in\F_q[X]$ of $T_0(Y_1,Y_2)$, for which $Q_0(X,f(X),x(\gamma X))=0$.\\
$\ind$ Note that $R_2(Y_1)=0$ if and only if $(Y_2-Y_1^q)\mid T(Y_1,Y_2)$, which cannot happen as $\text{deg}_{Y_1}(T)<q$ (char$(\tilde{\F})\leq q$). Furthermore, deg$(R_2)\leq dq$ for $d$ the total degree of $Q(X,Y_1,Y_2)$. Since char$(\tilde{\F})\leq q$ and $[\tilde{\F}:\F_p]=[\tilde{\F}:\F_q]\cdot[\F_q:\F_p]=(q-1)\log q$, we have $[\tilde{\F}:\F_p]\leq q\log q$. Using Berlekamp's deterministic factorization algorithm, we can find all roots of $R_2(Y_1)$ in time poly$(d,q)$ \cite{Ber70},\cite{Ker09}. Each such root is retrieved as an element in $\tilde{\F}$, which corresponds to a polynomial $f(X) \in \F_q[X]$ of degree less than $ q-1$. Once we have this list, we reduce it by only outputting the polynomials $f(X)$ of degree at most $k-1$ satisfying $Q_0(X,f(X),f(\gamma X))=0$.
\end{proof}

$\ind$ In the case where $s\geq 3$ the ideas in the above proof still apply, where now we want to list all degree $k-1$ polynomials $f(X)\in\F_q[X]$ satisfying (\ref{s_list_cond}), i.e. the $\Yb$-roots of $Q$. By dividing $Q(X,\Yb)\in\F_q[X][\Yb]$ by $E(X)$ enough times, we can assume that not all coefficients in $\F_q[X]$ are divisible by $E(X)$. We then quotient out $E(X)$ to get a nonzero polynomial $T(\Yb)$ over $\tilde{\F}$. By lemma \ref{alg_lemma} part 2, $f(\gamma^jX)\eq f(X)^{q^j}\bmod E(X)$ for all $j\in\Z_+$. Our root-finding task is now reduced to finding all roots $\Gamma\in\tilde{\F}$ of $R_s(Y_1)\coloneqq T(Y_1,Y_1^q,Y_1^{q^2},\cdots,Y_1^{q^{{s-1}}})$. We further need to make the assumption that the total degree of $T$ is lees than $q$, to ensure that $R_s(Y_1)\not\eq0$. The degree of $R_s(Y_1)$ is at most $q^s$, which means all its roots can be found in $q^{O(s)}$ time.\\
$\ind$ With this approach, we retrieve a list of at most $q^s$ polynomials in poly$(q)$ time. With rate $R$ we achieve polynomial time list decoding up to a fraction $1-R-\varepsilon$ of errors for every $R$ and arbitrary $\varepsilon>0$, where the alphabet size is $n^{O(1/\varepsilon)}$ \cite{GR08}. The optimal trade-off between rate and error-correction capability is therefore attained algorithmically.

\section{Linear-Algebraic List Decoding of Folded Reed-Solomon Codes}
\label{LinAlg_sec}

In \cite{Gur11} a linear-algebra based analysis of a variant of the above algorithm was given, which avoids the computationally expensive root-finding step over $\tilde{\F}$. The main idea is to solve one linear system in place of the interpolation step, and another one to find a ``small'' subspace of candidate solutions. There is again the step of ``pruning'' the list of candidate solutions (in this case a subspace), but other than this, the linear-algebraic algorithm can be implemented in \textit{quadratic} time.\\
$\ind$ They key observation is that the candidate solutions to the algebraic equations we wish to solve form an affine subspace, of the full message space $\F_q^k$. This is precisely what allows us avoid the interpolation step, by solving instead a linear system. Furthermore, this implies that the exponential dependence in $s$
of the list-size bound $q^{s-1}$ mentioned earlier, was inherently because of the dimension of the interpolation, implying that the identity $f(\gamma^{s-1}X)=f(X)\gamma^{q^{s-1}}$ over $\tilde{\F}$ used in the generalization of theorem \ref{root_find_triv} was not crucial in finding the roots. However, this identity seems to be the only known way to bound the list-size when higher degrees are used in the interpolation.\\
$\ind$ For our new list decoding algorithm, we need to find all polynomials $f(X)\in\F_q[X]$ of degree at most $k-1$ that satisfy the system of linear equations
\begin{equation}
\label{A_pol_lin_al}
  \Lambda(X) \coloneqq A_0(X)+A_1(X)f(X)+A_2(X)f(\gamma X)\cdots+A_s(X)f(\gamma^{s-1}X)=0
\end{equation}
in the coefficients of $f(X)=\sum\limits_{i=0}^{k-1}f_iX^i\in\F_q[X]$. Fact \ref{fact_aff_space_one} gives an efficient algorithm to find a compact representation of all the solutions of (\ref{A_pol_lin_al}). Additionally, the proof of lemma \ref{aff_subsp_lem} exposes the simple structure of (\ref{A_pol_lin_al}), which can be used to find the basis of solutions in quadratic time.

\begin{Fact}
\label{fact_aff_space_one}
\textit{The solutions $(f_0,f_1,\cdots,f_{k-1})$ of} (\ref{A_pol_lin_al}) \textit{form an affine subspace of $\F_q^k$}.
\end{Fact}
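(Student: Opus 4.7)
The plan is to observe that each coefficient of $\Lambda(X)$, viewed as a polynomial in $X$, is an $\F_q$-affine function of the unknowns $\mathbf{f} = (f_0, f_1, \ldots, f_{k-1})$, so the condition $\Lambda(X) \equiv 0$ translates into a finite system of affine-linear equations over $\F_q$, whose solution set is automatically an affine subspace of $\F_q^k$.

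Concretely, I would expand
\begin{equation*}
  f(\gamma^j X) \;=\; \sum_{i=0}^{k-1} \gamma^{ij} f_i X^i,
\end{equation*}
which is manifestly $\F_q$-linear in $\mathbf{f}$. Multiplying by $A_{j+1}(X)\in\F_q[X]$ preserves $\F_q$-linearity, so the sum $\sum_{j=0}^{s-1}A_{j+1}(X)\,f(\gamma^j X)$ is an $\F_q$-linear combination of the $f_i$'s with coefficients in $\F_q[X]$. Adding the fixed polynomial $A_0(X)\in\F_q[X]$ then makes the $X^\ell$-coefficient of $\Lambda(X)$ an affine function of $\mathbf{f}$ of the form $a_\ell + \sum_{i=0}^{k-1} b_{\ell,i} f_i$ with constants $a_\ell, b_{\ell,i}\in\F_q$ that depend only on $\gamma$, on the $A_j(X)$'s, and on $\ell$.

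The requirement $\Lambda(X) \equiv 0$ is therefore equivalent to the finite system
\begin{equation*}
  a_\ell + \sum_{i=0}^{k-1} b_{\ell,i}\, f_i \;=\; 0, \qquad \ell = 0, 1, \ldots, \deg(\Lambda),
\end{equation*}
of affine equations over $\F_q$ in the $k$ unknowns $f_0, \ldots, f_{k-1}$. The solution set of any such system is either empty or a coset $\mathbf{f}^\ast + V$ of the kernel $V \subseteq \F_q^k$ of the associated homogeneous system, and in either case is an affine subspace of $\F_q^k$. There is essentially no obstacle in the argument; the only point of care is the empty case, which is ruled out in our list-decoding setup by Lemma \ref{alg_cond_msg}, since every message polynomial $f(X)$ whose FRS encoding agrees with $\mathbf{y}$ in at least $d+k$ columns yields a legitimate solution $\mathbf{f}^\ast$ to the system.
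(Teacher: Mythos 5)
Your proof is correct, and it matches the reasoning the paper treats as self-evident: the Fact is stated without proof, and the same coefficient-by-coefficient expansion of $\Lambda(X)$ as affine functions of $(f_0,\ldots,f_{k-1})$ is exactly what the paper exploits in its proof of Lemma~\ref{aff_subsp_lem}. One small caveat on your closing remark: Lemma~\ref{alg_cond_msg} guarantees a solution only when some codeword agrees with $\yb$ in at least $d+k$ columns, so the solution set could in principle be empty for a badly corrupted $\yb$; this is harmless under the convention that the empty set counts as a (degenerate) affine subspace, but it is not literally ``ruled out'' by the lemma.
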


\begin{Lemma}
\label{aff_subsp_lem}
\textit{If} ord$(\gamma)\geq k$ \textit{(which is met for $\gamma$ primitive and $k\leq q-1$), the affine subspace of solutions to} (\ref{A_pol_lin_al}) \textit{has dimension $\tilde{d}\leq s-1$. Further, one can compute using $O((Nm)^2)=O(n^2)$ operations over $\F_q$ a matrix $\Mb\in\F_q^{k\times \tilde{d}}$ (for some $\tilde{d}\leq s-1$) and a vector $\zb\in\F_q^k$, such that the solutions are contained in the affine space $\Mb\xb+\zb$ for $\xb\in\F_q^{\tilde{d}}$. Also, $\Mb$ can be assumed to have the identity matrix $\Ib_{\tilde{d}}$ as a submatrix (without any extra computation)}.
\end{Lemma}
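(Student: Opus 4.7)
The plan is to read off (\ref{A_pol_lin_al}) coefficient-by-coefficient in $X$ and exhibit a lower-triangular $k\times k$ subsystem whose diagonal is governed by a polynomial of degree at most $s-1$; back substitution then gives both the dimension bound and the matrix $\Mb$, $\zb$ in the required form.

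First I would expand. Writing $A_j(X)=\sum_\ell a_{j,\ell}X^\ell$ and using $f(\gamma^{j-1}X)=\sum_i f_i\gamma^{(j-1)i}X^i$, the coefficient of $X^r$ in $\Lambda(X)$ equals
\[
a_{0,r}+\sum_{i=0}^{k-1}c_{r,i}\,f_i,\qquad c_{r,i}\;:=\;\sum_{j=1}^s a_{j,r-i}\,\gamma^{(j-1)i},
\]
with the convention that $a_{j,\ell}=0$ when $\ell$ is out of range. Setting $[X^r]\Lambda=0$ for $r=0,1,\dots,k-1$ produces a $k\times k$ linear system; the crucial observation is that $c_{r,i}=0$ whenever $i>r$, so the system is lower-triangular, and its diagonal entries satisfy $c_{r,r}=B(\gamma^r)$ for the auxiliary polynomial
\[
B(Y)\;:=\;\sum_{j=1}^s a_{j,0}\,Y^{j-1},\qquad \deg B\leq s-1.
\]

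Next I would bound the nullity. Because $\mathrm{ord}(\gamma)\geq k$, the elements $\gamma^0,\dots,\gamma^{k-1}$ are pairwise distinct, and a nonzero polynomial of degree at most $s-1$ has at most $s-1$ roots, so at most $s-1$ of the diagonal entries vanish. I process $r=0,1,\dots,k-1$ in order: when $c_{r,r}\neq 0$ I solve $f_r=c_{r,r}^{-1}\bigl(-a_{0,r}-\sum_{i<r}c_{r,i}f_i\bigr)$; otherwise I declare $f_r$ a free parameter (the residual equation $a_{0,r}+\sum_{i<r}c_{r,i}f_i=0$ becomes an additional constraint on the free parameters already introduced, which can only further restrict the solution set, never enlarge it). This expresses every $f_i$ as an explicit affine form in at most $\tilde d\leq s-1$ free parameters; stacking these forms yields a matrix $\Mb\in\F_q^{k\times\tilde d}$ and vector $\zb\in\F_q^k$ with the solution set contained in $\Mb\xb+\zb$. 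Because the rows of $\Mb$ indexed by the free positions are, by construction, the standard basis vectors of $\F_q^{\tilde d}$, the identity $\Ib_{\tilde d}$ automatically appears as a submatrix.

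For the complexity, $c_{r,i}$ is nonzero only when $0\leq r-i\leq\max_j\deg A_j\leq d$, so at step $r$ the back substitution combines $O(d)$ affine forms each of length $O(s)$; summing over $r$ and tabulating the powers $\gamma^{(j-1)i}$ once up front gives a total of $O((Nm)^2)=O(n^2)$ field operations. The main obstacle will be the degenerate case $B\equiv 0$, i.e.\ $a_{j,0}=0$ for every $j\geq 1$, where the diagonal argument collapses. I would handle it by extracting the largest common power of $X$ dividing $A_1,\dots,A_s$ (the same power must divide $A_0$ or else no solution exists) and cancelling it from the entire equation, reducing to a system whose leading polynomial $B$ is nonzero without altering the $\Yb$-root set of $Q$. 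All stated dimension and complexity bounds are preserved.
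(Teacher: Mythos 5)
Your proof is correct and mirrors the paper's argument almost exactly: both expand $\Lambda(X)$ coefficient-by-coefficient to expose a lower-triangular system whose diagonal entries are $B(\gamma^r)$ for the same auxiliary polynomial $B(Y)=\sum_{j=1}^{s}a_{j,0}Y^{j-1}$, both invoke $\mathrm{ord}(\gamma)\geq k$ and $\deg B\leq s-1$ to bound the number of vanishing pivots by $s-1$, and both obtain $\Mb,\zb$ and the $O(n^2)$ cost via back substitution; the paper factors out the common power of $X$ at the start whereas you treat $B\equiv 0$ as a final degenerate case, but this is a cosmetic reordering rather than a different route.
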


\begin{proof}
By factoring out the common powers of $X$ that divide \{$A_i(X)\}_{i=0}^s$ from (\ref{Q_interp}), we can assume that $X\nmid A_{\iota}(X)$ for ate least one $\iota\in\{0,1,\cdots,s\}$ --- more specifically, has a nonzero constant term. Further, if $X\mid A_{i}(X)$ for all $i\in\N_s$; then $X\mid A_0(X)$, and we can take $\iota\in\N_s$.\\
$\ind$ For $0\leq i\leq s$ denote $A_i(X)=\sum\limits_{j=0}^{d+k-1}a_{i,j}X^j$. By the degree constraints on  $\{A_i(X)\}_{i=1}^s$ we have $a_{i,j}=0$ for all pairs $(i,i)\in \N_s\times\Z_{>d}$, but we still introduce these coefficients for notational convenience. Define
$$ B(X)\coloneqq a_{1,0}+a_{2,0}X+a_{3,0}X^2+\cdots+a_{s,0}X^{s-1} = \sum\limits_{i=0}^{s}a_{i,0}X^{i-1} $$
which corresponds to $A_0(X)$, for which deg$(B)\leq s-1$. Since $a_{\iota,0}\neq0$, it follows that $B(X)\not\eq0$.\\
$\ind$ It is clear that the constant term of $\Lambda(X)$ equals $\left(a_{0,0}+f_0\sum\limits_{i=0}^sa_{i,0}\right) = \left(a_{0,0}+B(1)f_0\right)$. Thus if $B(1)\neq0$, the coefficient $f_0$ is uniquely determined as $-a_{0,0}/B(1)$. If $B(1)=0$; then $a_{0,0}=0$, or there will be no solutions to (\ref{A_pol_lin_al}). In that case, we assign an arbitrary value in $\F_q$ to $f_0$.\\
$\ind$ The coefficient of $X^r$ of $\Lambda(X)$ equals
\begin{equation}
\label{coeffs_bl}
  \lambda_r = a_{0,r} + \left(\sum\limits_{j=0}^r f_{r-j}\left(\sum\limits_{i=1}^sa_{i,j}\gamma^{(i-1)(r-j)}\right)\right) = B(\gamma^{r})f_r+\left(\sum\limits_{l=0}^{r-1}b_l^{(r)}f_l\right)+a_{0,r}
\end{equation}
for some coefficients $b_l^{(r)}\in\F_q$, and (\ref{coeffs_bl}) must equal zero. Furthermore, if $B(\gamma^r)\neq0$; $f_r$ is an affine combination of $\{f_j\}_{j=0}^{r-1}$. In particular, $f_r$ is uniquely determined given the values of $\{f_j\}_{j=0}^{r-1}$.\\
$\ind$ The dimension of the space of solutions of (\ref{A_pol_lin_al}) is therefore at most $r$; $0\leq r<k$, for which $B(\gamma^r)=0$. By our assumption that ord$(\gamma)\geq k$, it follows that $\{\gamma^r\}_{r=0}^{k-1}$ are all distinct. Since $B(X)$ is a nonzero polynomial and deg$(B)\leq s-1$, we know that $B(\gamma^r)=0$ for at most $s-1$ values of $r$. This concludes the proof that the solution space is of dimension at most $s-1$. The claim regarding quadratic complexity and the structure of $\Mb$, follows from the fact that (\ref{coeffs_bl}) resembles a ``lower-triangular'' form, which can be solved in $O(n^2)$ with the \textit{back-substitution} method.
\end{proof}

$\ind$ We close off this section with some comments on the rest of the results from \cite{Gur11}. The algorithm we saw gives a quadratic runtime for the list decoder; except for the final step of pruning the subspace, which could take $O(q^s)$ time. The formal statement may be found in \cite{Gur11} theorem 7, which also relates to the discussion we had following theorem \ref{list_dec_thm_GR08}. Lastly, the author discusses a possible approach to improving the possible worst case list-size bound, by restricting the message coefficients $(f_0,\cdots,f_{k-1})$ to belong to a special subset $\mathcal{V}\subseteq\F_q^k$ which satisfy two conflicting demands; \textit{largeness} and (what he coined as) \textit{subspace-evasive} \cite{DL12},\cite{BAS14}.

\section{Hensel-Lifting for Folded Reed-Solomon Codes}
\label{HL_sec} 

In this section we present a third approach to the \textit{root-finding} problem. This is quite different to the approaches discussed in \cref{root_find_subsec}, \cref{LinAlg_sec}, and uses ideas developed in number theory; namely \textit{Hensel's (lifting) lemma} \ref{Hens_lem_simple}. We give a brief discussion on the importance of this in \cref{NT_subs}. By theorem \ref{list_dec_thm_GR08} we already have a polynomial time algorithm which predates the algorithm based on \textit{Hensel-lifting} \cite{Bra10},\cite{PB09}, though the fact that it works over the exponentially large finite field $\tilde{\F}\cong\F_{q^{q-1}}$, makes any practical implementations difficult and even more numerically unstable. This newer decoder, is also faster experimentally.

\begin{Lemma}[Simplest version, \cite{NZM91}]
\label{Hens_lem_simple}
\textit{Suppose that $f(x)\in\Z[x]$. If $f(a)\eq0\bmod p^j$ and $f'(a)\not\eq0\bmod p$, then there exists a unique $t\bmod p$ such that $f(a+tp^j)\eq0\bmod p^{j+1}$}.
\end{Lemma}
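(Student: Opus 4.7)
The plan is to perform a first-order integer Taylor expansion of $f$ at $a$ and then reduce modulo $p^{j+1}$, which collapses the problem to a single linear congruence mod $p$. The key combinatorial tool is the identity
$$f(a+h) = f(a) + f'(a)\, h + h^2 \cdot g(a,h)$$
for some $g \in \Z[x,y]$. I would prove this monomial by monomial via the binomial theorem: $(a+h)^n = a^n + n a^{n-1} h + \binom{n}{2} a^{n-2} h^2 + \cdots$, so every term past the linear one has $h^2$ as a factor, and every coefficient is an integer. Summing over the monomials of $f$ gives the claimed expansion. No division by factorials is needed, which is crucial for remaining inside $\Z$.

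Next, I would specialize $h = t p^j$. Because $j \geq 1$ gives $2j \geq j+1$, the remainder term $h^2 \cdot g(a,h) = t^2 p^{2j} \cdot g(a, tp^j)$ vanishes modulo $p^{j+1}$, leaving
$$f(a + tp^j) \equiv f(a) + f'(a) \cdot t p^j \pmod{p^{j+1}}.$$
By hypothesis $f(a) = p^j b$ for some $b \in \Z$, so the right side equals $p^j \bigl(b + f'(a)\cdot t\bigr) \bmod p^{j+1}$, which is $\equiv 0 \bmod p^{j+1}$ if and only if
$$b + f'(a)\cdot t \equiv 0 \pmod{p}.$$
Since $f'(a) \not\equiv 0 \bmod p$, $f'(a)$ is a unit in $\Z/p\Z$, so this congruence has the unique solution $t \equiv -b \cdot f'(a)^{-1} \pmod{p}$, which yields both existence and uniqueness at once.

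There isn't really a substantive obstacle; the two things to watch are that the Taylor remainder genuinely has integer coefficients (the binomial-coefficient argument guarantees this), and that $j \geq 1$ so that $p^{2j}$ is absorbed into $p^{j+1}$. Conceptually, the proof is a linearization: modulo $p^{j+1}$, the polynomial $f$ looks affine in a $p^j$-neighborhood of any approximate root $a$, and the nondegeneracy condition on $f'(a)$ is precisely what makes the resulting linear equation uniquely solvable. This local structure is what later allows iteration of the lemma to lift roots mod $p$ all the way to roots in the $p$-adic integers, an idea that the root-finding decoder of \cite{Bra10,PB09} transplants into $\F_q[X]$ using the analog $(E(X), X^{q-1} - \gamma)$ in place of $(p, p)$.
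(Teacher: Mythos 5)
Your proof is correct, and it is in fact the standard textbook argument for this version of Hensel's lemma: expand $f(a+h)$ via the binomial theorem to isolate the linear term with an integer remainder of the form $h^2 g(a,h)$, substitute $h = tp^j$, observe that $p^{2j}$ is absorbed into $p^{j+1}$ when $j\ge 1$, and reduce to a linear congruence in $t$ that is uniquely solvable because $f'(a)$ is a unit mod $p$. The paper does not supply its own proof (it cites \cite{NZM91}, where essentially this same Taylor-expansion argument appears), so there is no alternative route to compare against; your write-up would serve as a faithful stand-alone justification, and your closing remark correctly identifies the $j\ge 1$ hypothesis as the one small point a reader must keep in view.
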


\begin{Def}
\textit{The polynomial $g(X)$ is a \textbf{partial $\Yb$-root of precision $b$} in $Q(X,\Yb)$, if $g(X)\eq f(X)(\bmod X^b)$, for some $\Yb$-root $f(X)\in\F_q[X]$ of $Q(X,\Yb)$}.
\end{Def}

$\ind$ We note that the degree of a partial $\Yb$-root of precision $b$, is at most $b-1$, and Hensel-lifting is a general procedure for computing such roots. The key is to recursively \textit{lift} a partial root of degree $b$ to a new one of precision $b+1$, as in the simplest case of Hensel's lemma \ref{Hens_lem_simple}. Let $Q$ satisfy (\ref{int_cond_frs}) (in \cite{Bra10} such polynomials are referred to as \textit{interpolation polynomials}). For our purposes, we may consider the nonzero polynomial (\ref{Q_interp}). From (\ref{s_list_cond}), if $f(X)=\sum_{i=0}^{k-1}f_iX^i$ is a $\Yb$-root of $Q$, then
\begin{equation}
\label{cond_b_1} 
  Q(X,f(X),f(\gamma X),\cdots,f(\gamma^{s-1} X)) \ \equiv \ Q(0,f_0,\cdots,f_0)\left(\bmod X\right) \ \eq \ 0 \left(\bmod X\right).
\end{equation}
Since $X\nmid Q(0,f_0,\cdots,f_0)$, it follows that $Q(0,f_0,\cdots,f_0)=0$. A partial root $f_0$ of precision $b=1$ must therefore satisfy this condition.\\
$\ind$ It is clear that if $X^r\mid Q(X,\Yb)$ for some $r\in\Z_+$, then any $\Yb$-root of $Q$ will also be a root of $X^{-r}Q$ --- this idea resembles the constructive proof of lemma \ref{Hens_lem_simple}. We may therefore assume that $X\nmid Q$, or equivalently that $Q(0,\Yb)\not\eq0$. There is a subtlety here though.  If $Q(0,\Yb)\in\I$, for $\I$ the proper ideal $\I=\langle Y_1-Y_2,Y_2-Y_3,\cdots,Y_{s-1}-Y_s\rangle\lhd \F_q[\Yb]$; then $Q(0,Y,\cdots,Y)=0$, and (\ref{cond_b_1}) reveals nothing about $f_0$. In such a case, we have $q$ partial roots of precision $b=1$. Furthermore if $Q(0,Y,\cdots,Y)\not\eq0$, by (\ref{cond_b_1}) $f_0$ must be among the roots of this polynomial. Since deg$(Q(X,f(X),\cdots,f(\gamma^{s-1}X)))\leq$deg$(Q(0,\Yb))$, one can constrain the number of possible partial roots of  precision $b=1$ (at most $\ell=\floor{\frac{\Delta-1}{k-1}}$, for $\Delta$ defined in \cite{Bra10} corollary 5.6).\\
$\ind$ For lifting partial roots, under the assumption that $f(X)=f_0+X\tilde{f}(X)$ is a $\Yb$-root of $Q$, i.e.
$$ Q(X,f(X),f(\gamma X),\cdots,f(\gamma^{s-1} X)) = Q(X,f_0+X\tilde{f}(X),f_0+\gamma X\tilde{f}(\gamma X),\cdots,f_0+\gamma^{s-1} X\tilde{f}(\gamma^{s-1} X))=0 $$
and $f_0$ is known, it follows that $\tilde{f}(X)$ is a $\Yb$-root of
\begin{equation*}
  \tilde{Q}(X,\Yb)\coloneqq Q(X,f_0+X\cdot Y_1,f_0+\gamma X\cdot Y_2,\cdots,f_0+\gamma^{s-1} X\cdot Y_s)
\end{equation*}
where now $\tilde{f}(\gamma^{i-1} X)$ from the above identity ``replace'' the corresponding variable $Y_i$, for all $i\in\N_s$.

\begin{Lemma}
\label{lifting_lemma}
\textit{Let the polynomial $Q(X,\Yb)\in\F_q[X,\Yb]$ be nonzero, and $b\in \F_q$. Then the polynomial $Q(X,b+X\cdot Y_1,\cdots, b+\gamma^{s-1}X\cdot Y_s)$ is also nonzero}.
\end{Lemma}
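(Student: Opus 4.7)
The plan is to exhibit a nonvanishing ``leading part'' in the transformed polynomial by looking at the top total $\Yb$-degree, exploiting the fact that $\gamma \neq 0$ so each substitution $Y_i \mapsto b + \gamma^{i-1} X Y_i$ has a genuine nontrivial linear-in-$Y_i$ piece.

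First, I would view the substitution as the $\F_q[X]$-algebra endomorphism $\phi$ of $\F_q[X][\Yb]$ sending $Y_i \mapsto b + \gamma^{i-1} X Y_i$, and decompose $Q$ by total $\Yb$-degree: write $Q(X,\Yb) = \sum_{d=0}^{D} Q_d(X,\Yb)$ where $Q_d$ is $\Yb$-homogeneous of degree $d$ and $Q_D \not\equiv 0$. Since $\phi(Y_i) = b + \gamma^{i-1} X Y_i$ has $\Yb$-degree $1$ with top part $\gamma^{i-1} X Y_i$, for any multi-index $\alpha = (\alpha_1,\ldots,\alpha_s)$ we get
\[
  \phi(\Yb^\alpha) = \prod_{i=1}^s (b + \gamma^{i-1} X Y_i)^{\alpha_i} = X^{|\alpha|} \Big(\prod_{i=1}^s \gamma^{(i-1)\alpha_i}\Big) \Yb^\alpha + (\text{terms of }\Yb\text{-degree} < |\alpha|).
\]

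Next, I would extract the $\Yb$-degree-$D$ component of $\phi(Q)$. Since $\phi(\Yb^\alpha)$ has top $\Yb$-degree exactly $|\alpha|$, the only contributions to $\Yb$-degree $D$ in $\phi(Q)$ come from monomials $\Yb^\alpha$ of $Q$ with $|\alpha|=D$, and only through their top parts. Writing $Q_D(X,\Yb) = \sum_{|\alpha|=D} q_\alpha(X)\Yb^\alpha$, the degree-$D$ component of $\phi(Q)$ is
\[
  X^D \sum_{|\alpha|=D} q_\alpha(X)\Big(\prod_{i=1}^s \gamma^{(i-1)\alpha_i}\Big) \Yb^\alpha.
\]
The monomials $\Yb^\alpha$ appearing here are pairwise distinct, so this expression vanishes identically only if every coefficient $X^D q_\alpha(X) \prod_i \gamma^{(i-1)\alpha_i}$ vanishes in $\F_q[X]$. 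But $X^D$ is a nonzero element of $\F_q[X]$ and each $\gamma^{(i-1)\alpha_i}$ is a unit since $\gamma \in \F_q^\times$, so this would force every $q_\alpha \equiv 0$, contradicting $Q_D \not\equiv 0$.

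Therefore the top $\Yb$-degree component of $\phi(Q)$ is nonzero, whence $\phi(Q) = Q(X, b+XY_1, \ldots, b+\gamma^{s-1}XY_s)$ is nonzero, as claimed. I do not anticipate a real obstacle: the only thing to double-check is the bookkeeping that lower-degree junk from higher-$|\alpha|$ monomials cannot cancel the putative leading part, which is automatic once we isolate the exact $\Yb$-degree $D$ piece rather than a particular monomial. The role of $b$ is completely inert because it sits in strictly lower $\Yb$-degree.
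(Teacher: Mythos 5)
Your proof is correct, and it takes a genuinely different — and more rigorous — route than the paper's. The paper's proof simply asserts that the substitution map
\[
  \phi_b : P(X,\Yb) \longmapsto P\bigl(X,\,b+XY_1,\,b+\gamma XY_2,\,\dots,\,b+\gamma^{s-1}XY_s\bigr)
\]
is a \emph{bijection} of $\F_q[X][\Yb]$ and concludes immediately. As stated that claim is not quite right: $\phi_b$ is not surjective onto $\F_q[X][\Yb]$ (for instance $Y_1$ itself is not in the image, since inverting the substitution requires dividing by $X$), and the intermediate remark ``if $P=0$ then $\phi_b(P)=0$'' is only the trivial fact that a ring homomorphism sends $0$ to $0$, which is the wrong direction of implication. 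What is actually needed, and what is true, is \emph{injectivity} of $\phi_b$, which the paper never verifies. (One clean way to repair the paper's argument: $\phi_b$ extends to a genuine automorphism of $\F_q(X)[\Yb]$, since the inverse substitution $Y_i \mapsto \gamma^{-(i-1)}X^{-1}(Y_i-b)$ is available over the field of fractions; restricting the automorphism to the subring $\F_q[X][\Yb]$ then gives injectivity.)

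Your argument instead proves injectivity directly by isolating the top total $\Yb$-degree $D$ of $Q$ and observing that the $\Yb$-degree-$D$ piece of $\phi_b(Q)$ is $X^D\sum_{|\alpha|=D}q_\alpha(X)\bigl(\prod_i\gamma^{(i-1)\alpha_i}\bigr)\Yb^\alpha$, which is manifestly nonzero because $\F_q[X]$ is a domain and $\gamma\in\F_q^\times$. This is elementary, self-contained, and avoids the (false) surjectivity claim; it also yields a little extra information, namely that $\phi_b$ preserves the total $\Yb$-degree and multiplies the leading $\Yb$-homogeneous part by $X^D$. Both approaches are reasonable, but yours is the more careful one, and the fraction-field trick above is the natural way to make the paper's one-liner precise.
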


\begin{proof}
Define the bijection $\phi_{b}:\F_q[X][\Yb]\to\F_q[X][\Yb]$ as
$$ \phi_{b}\big(P(X,Y_1,Y_2,\cdots,Y_s)\big) = P(X,b+X\cdot Y_1,b+\gamma X\cdot Y_2,\cdots,b+\gamma^{s-1} X\cdot Y_s) $$
for $P(X,\Yb)\in\F_q[X][\Yb]$. If $P=0$, then $\phi_{b}(P)=0$. Since $\F_q[X,\Yb]\cong\F_q[X][\Yb]$, we may assume that $P(X,\Yb)\in\F_q[X,\Yb]$. By assuming that $Q(X,\Yb)\not\eq0$, it follows that $\phi_{b}(Q)\not\eq0$.
\end{proof}

$\ind$ This lemma is precisely what we need for lifting the partial roots, and can be viewed as another case of lemma \ref{Hens_lem_simple}. Under the assumption that $Q(X,\Yb)\not\eq0$, it follows that $\tilde{Q}(X,\Yb)\not\eq0$. By lemma \ref{lifting_lemma} it follows that there exists an integer $r\geq0$ for which $X^r\parallel \tilde{Q}(X,\Yb)$, and we define
$$ Q_{f_0}(X,\Yb)=X^{-r}\cdot\tilde{Q}(X,\Yb). $$
Recall that $f_0$ was defined such that $f(X)=f_0+X\tilde{f}(X)$ is a $\Yb$-root of $Q$, so by evaluating $Q(X,\Yb)$ at $X=0$ we get $Q(0,\Yb)=Q(0,f_0,\cdots,f_0)=0$. This implies that $X\mid Q(X,\Yb)$, hence $r$ is positive. Since $\tilde{f}(X)$ is a $\Yb$-root of $\tilde{Q}$, it follows that it is also a $\Yb$-root of $Q_{f_0}$.\\
$\ind$ The above discussion can be summarized in the recursive expression
\begin{equation}
\label{rec_expr}
  \Phi_k(Q) \subseteq \bigcup_{f_0\in\Phi_1(Q)} \big(f_0+X\cdot\Phi_{k-1}(Q_{f_0})\big)
\end{equation}
where $\Phi_k(Q)$ denotes the set of partial $\Yb$-roots of $Q$, of precision $k$. In the recursive expression, it is clear that partial roots of precision $b$, are also partial roots of precision $b+1$. By definition, $\Phi_k(Q)$ contains the set of all $\Yb$-roots of $Q$ of degree at degree at most $k-1$. Likewise, we define the \textit{list} of polynomials
$$ \Lambda_{k}(Q) = \bigcup_{f_0\in\Lambda_1(Q)} \big(f_0+X\cdot\Lambda_{k-1}(Q_{f_0})\big) \qquad \text{where} \qquad \Lambda_1(Q)=\big\{f_0\in\F_q \mid Q(0,f_0,\cdots,f_0)=0\big\} $$
and by the condition we showed that precision $b=1$ partial roots $f_0$ must satisfy (\ref{cond_b_1}), we have
$$ \Phi_k(Q) \subseteq \Lambda_{k}(Q) $$
for all $k\in\Z_+$. We can attain $\Lambda_1(Q)$ by enumerating all $f_0\in\F_q$ satisfying (\ref{cond_b_1}), hence; we can recursively compute $\Lambda_k(Q)$ by algorithm \ref{enum_L_k_alg}.

\begin{algorithm}[H]
\label{enum_L_k_alg}
\SetAlgoLined
\KwIn{$k\in\Z_+$ and $Q(X,\Yb)\in\F_q[X,Y]$}
\KwOut{$\Lambda_k(Q)$ --- a list containing all precision $k$ $\Yb$-roots of $Q$}
  \eIf {$k\leq0$}
    {$\Lambda_k(Q)\gets\{0\}$}
    {let $Q_{f_0}(X,\Yb)\gets X^{-r}\cdot Q(X,\Yb)$, for $r$ s.t. $X^r\parallel Q(X,\Yb)$\\
    \eIf {$Q_{f_0}(0,Y,\cdots,Y)=0$}
      {$B\gets \F_q$}
      {$B\gets\big\{\beta\in\F_q : Q(0,\beta,\cdots,\beta)=0\big\}$}
    }
  \Return $\Lambda_k(Q)\gets\bigcup\limits_{f_0\in B}\big(f_0+X\cdot\Lambda_{k-1}(Q_{f_0})\big)$ \Comment{compute recursively}
\caption{Enumeration of $\Lambda_k(Q)$}
\end{algorithm}

$\ind$ By our previous discussions, the set $\Lambda_k(Q)$ contains the list of polynomials we are looking for, from \cref{root_find_subsec}. This list can then be reduced to the set of $\Yb$-roots of $Q$ of degree at most $k-1$, by retaining only the polynomials which satisfy (\ref{s_list_cond}). For further comparison of the root-methods, refer to \cite{Bra10} sections 5.4.3.

\section{Locally Decodable Multiplicity Codes}
\label{LDCs_Mult_codes_sec}

We now shift gears and turn our attention to \textit{multiplicity codes} \cite{KSY10}, a type of locally decodable error-correcting codes (LDCs). Recall that the main parameters of LDCs are its length $n$ (or its rate $R=k/n$ for fixed $k$) and \textit{query complexity} of local decoding. Ideally, we would like to have both of these parameters be small, though one cannot minimize them both simultaneously.\\
$\ind$ Most work prior to \cite{KSY10} focused on studying codes in the low and constant query regimes, which have applications in cryptography and complexity theory. Multiplicity codes on the other hand, were introduced in order to study how the query complexity of large rate (approaching 1) LDCs can be minimized. Before the construction of these codes, it was unknown how to get any nontrivial local decoding for codes of rate $R>1/2$.\\
$\ind$ This relatively new family of codes has been named multiplicity codes, as they are based on evaluating multivariate polynomials and their derivatives, while also considering high-multiplicity zeroes. By the way they are defined, they inherit the local decodability of the classical multivariate polynomial codes based, while achieving better trade-offs and flexibility in the rate and minimum distance. In \cref{Der_Codes_sec} we will see how variants of these codes relate to FRS codes.\\
$\ind$ Before we start, let us define a \textit{local self-correction} property, as for our purposes we want to construct ``locally self-correctable codes'' (LSCCs) over ``large alphabets'' $\Sigma$. The code $\Cc\subseteq\Sigma^n$ of size $|\Cc|=|\Sigma|^k$ and large $R$ we want to construct, should satisfy the property: given access to a received string $r\in\Sigma^n$ which is close to some $c\in\Cc$, and given any coordinate index $i\in\N_n$, it is possible to make few queries to the coordinates of $r$, and with high probability retrieve $c_i$. We point out that this is different from the notion of locally decodability, where to goal is to recover the coordinate of the original message $\vec{m}\in\Sigma^k$. We show in \cref{Mult_codes_subs} though that for linear codes, LSCCs imply LDCs. Throughout \cref{LDCs_Mult_codes_sec}, $q$ denotes a power of a prime $p$.

\subsection{Bivariate Multiplicity Codes}
\label{Biv_Mult_codes_subs}

In order to define the bivariate multiplicity codes; the simplest example of multiplicity codes, we first need to give several definitions. The bivariate multiplicity codes already have improvements the in terms of rate for local self-correction over Reed-Muller codes (RM), while being locally self-correctable with only a constant factor more queries.\\
$\ind$ For a vector $\ib=(i_1,\cdots,i_n)\in\N_0^{n}$, we denote its \textit{weight} by wt$(\ib)=\|\ib\|_1=\sum_{j=1}^n\ib_j$. As in \cref{int_step_subsec_frs}, denote the formal variables $X_1,\cdots,X_n$ by $\Xb$, thus $\F[\Xb]=\F[X_1,\cdots,X_n]$. Lastly, for $\ib\in\N_0^{n}$ let $\Xb^{\ib}$ denote the monomial $\prod_{j=1}^nX_{j}^{\ib_j}\in\F[\Xb]$. It follows that (total) $\text{deg}\left(\Xb^{\ib}\right)=\text{wt}(\ib)$.

\begin{Def}
\label{Hasse_der}
\textit{For $P(\Xb)\in\F[\Xb]$ and $\ib\in\N_0^n$, the \textbf{$\ib^{th}$ Hasse derivative of $P$}, denoted $P^{(\ib)}(\Xb)\in\F[\Xb]$, is the coefficient of $\Zb^{\ib}$} \textit{in the polynomial} $\tilde{P}(\Xb,\Zb)\coloneqq P(\Xb+\Zb)\in\F[\Xb,\Zb]$. \textit{Thus}
$$ P(\Xb+\Zb)=\sum_{\ib}P^{(\ib)}(\Xb)\Zb^{\ib} $$
\textit{and observe that for all} $P,Q\in\F[\Xb]$ \textit{and} $\lambda\in\F$
$$ (\lambda P)^{(\ib)}(\Xb)=\lambda P^{(\ib)}(\Xb) \qquad \textit{ and } \qquad P^{(\ib)}(\Xb)+Q^{(\ib)}(\Xb) = (P+Q)^{(\ib)}(\Xb). $$
\end{Def}

\begin{Def}
\label{mult_def}
\textit{For $P(\Xb)\in\F[\Xb]$ and $\ab\in\F^n$, the \textbf{multiplicity} of $P$ at $\ab$, denoted by} mult$(P,\ab)$, \textit{is the largest integer $M$ such that for every non-negative vector $\ib$ with} wt$(\ib)<M$, \textit{we have $P^{(\ib)}(\ab)=0$ (if $M$ is taken arbitrarily large, we set} mult$(P,\ab)=\infty$\textit{). Note that} mult$(P,\ab)\geq0$ \textit{for every $\ab$}. 
\end{Def}

\begin{Def}
\label{biv_mult_code}
\textit{The multiplicity code of \textbf{order 2} evaluations of degree $d=2(1-\delta)q$ bivariate polynomials over $\F_q$ for $\delta>0$, is the set of codeword vectors corresponding to the polynomials $P(X,Y)\in\F_q[X,Y]$}
$$ C(P)=\left\langle\left(P(\ab),\frac{\partial P}{\partial X}(\ab),\frac{\partial P}{\partial Y}(\ab)\right)\right\rangle_{\ab\in\F_q^2}\in(\F_q^3)^{q^2}\cong\F_{q^3}^{q^2} $$
\textit{where $C(P)$ indicates the encoding of $P$. The coordinates are indexed by $\F_q^2$; thus $n=q^2$, and the codewords are indexed by the bivariate polynomials of degree at most $d$ over $\F_q$}.
\end{Def}

$\ind$ In simpler words, the $\ab$ coordinate consists of the evaluations of $P$ and its two partial derivatives at $\ab$. By \cite{DKSS13} lemma 8 (strengthening of the Schwartz-Zippel lemma), it follows that two distinct polynomials of degree at most $d$ can agree \underline{with multiplicity 2} on at most $d/2q$-fraction of the points in $\F_q^2$, hence this codes has relative distance $\delta=1-d/2q$. Since now $|\Sigma|=q^3$, the message length $k$ equals the number of $q^3$-ary symbols required to specify a polynomial of degree at most $d$. Since we have $d+1$ monomials, 2 variables and are ``grouping'' the elements in pairs of three (going from $q$-ary to $q^3$-ary), we get $k={d+1 \choose 2}/3$. The rate of the bivariate multiplicity code is therefore 
$$ R=\frac{k}{n} = \frac{{d+1 \choose 2}/3}{q^2} \simeq \frac{\frac{d^2}{2}/3}{q^2} = \frac{\frac{\big(2q(1-\delta)\big)^2}{2}/3}{q^2} = \frac{2(1-\delta)^2}{3} \qquad \Longrightarrow \qquad \lim_{\delta\to0}\left\{R\right\}=\frac{2}{3} $$
an improvement to the rate of the corresponding RM code; which was less than $\frac{1}{2}$, while having the same distance. The bivariate RM code is instead defined by $C(P)=\left\langle P(\ab)\right\rangle_{\ab\in\F_q^2}\in\F_{q}^{q^2}$, and has parameters $\delta=1-d/q$, $k={d+1 \choose 2}$, $n=q^2$, thus $R={d+1 \choose 2}/q^2\simeq \frac{(1-\delta)^2}{2}<\frac{1}{2}$.

\subsection{Local Self-Correction of Bivariate Multiplicity Codes}
\label{LSC_Biv_Mult_codes_subs}

We now see how local self-correction is achieved. Given a received word $r\in(\F_q^3)^{q^2}$ close to the codeword $C(P)$ in terms of Hamming distance $\Delta_H(\cdot,\cdot)$, we want to recover the ``correct'' symbol at coordinate $\ab$ of a given point $\ab\in\F_q^2$, namely $\left(P(\ab),\frac{\partial P}{\partial X}(\ab),\frac{\partial P}{\partial Y}(\ab)\right)$. The approach is similar to local self-correction of RM codes, where we pick a random direction $\bb\in\F_q^2$ and look at the restriction of $r$ to coordinates in the line $L=\{\ab+\bb t\mid t\in\F_q\}$. With high probability over the choice of $\bb$, $r|_L$ and $C(P)|_L$ agree in many locations; i.e. $\Delta_H\left(r|_L,C(P)|_L\right)$ is small. The next step is to recover $Q(T)=P(\ab+\bb T)$ for which deg$(Q)\leq 2(1-\delta)q$, in order to compute the $3$-tuple defining $C(P)$.\\
$\ind$ It is important to notice that for every $t\in\F_q$, the $\ab+\bb t\in L$ coordinate of $C(P)$ completely determines both the value and the $1^{st}$ derivative of $Q(T)$ at point $t$, as by the chain rule we have
$$ \left(Q(t),\frac{\partial Q}{\partial T}(t)\right) = \left(P(\ab+\bb t),\bb_1\frac{\partial P}{\partial X}(\ab+\bb t)+\bb_2\frac{\partial P}{\partial Y}(\ab+\bb t)\right) \in \F_q^2. $$
Our knowledge of $r|_L$ therefore gives us access to $q$ ``noisy'' evaluations of $Q(T)$ (one for each $t\in\F_q$), and its derivative $\frac{\partial Q}{\partial T}(T)$, which is enough for recovering $Q(T)$. Clearly $Q(0)=P(\ab)$, and $\frac{\partial Q}{\partial T}(0)=\bb_1\frac{\partial P}{\partial X}(\ab)+\bb_2\frac{\partial P}{\partial Y}(\ab)$ is the directional derivative of $P$ at $\ab$ in direction $\bb$.\\
$\ind$ We repeat the above for a different direction $\grave{\bb}\in\F_q^2$ and $\grave{Q}(T)=P(\ab+\grave{\bb}t)$, to recover the directional derivative $\frac{\partial \grave{Q}}{\partial T}(0)=\grave{\bb}_1\frac{\partial P}{\partial X}(\ab)+\grave{\bb}_2\frac{\partial P}{\partial Y}(\ab)$ of $P$ at $\ab$ in direction $\grave{\bb}$. Together, the two directional derivatives $\frac{\partial Q}{T}(0)$ and $\frac{\partial \grave{Q}}{\partial T}(0)$ suffice to recover $\frac{\partial P}{\partial X}(\ab)$ and $\frac{\partial P}{\partial Y}(\ab)$, as we have a linear system of two equations; with two unknowns which we want to recover. All in all, this approach makes $2q=O(\sqrt{k})$ queries; needed for the ``noisy'' evaluations of $Q(T)$ and $\grave{Q}(T)$. This sublinear query complexity was something not known before, for local decoding in the regime of $R>1/2$.

\subsection{Multiplicity Codes and Local Self-Correction}
\label{Mult_codes_subs}

In order to get multiplicity codes of rate approaching $1$, we also consider evaluations of all derivatives of the multivariate polynomial $P$ up to an even higher order. To locally recover the evaluations of the higher order at a point $\ab$, we pick many random lines passing through $\ab$, try to recover the restriction of $P$ to those lines (correspond to univariate polynomials), which we combine in a certain way. The procedure is formally explained in algorithm \ref{simpl_LSC_alg}. By simultaneously increasing the maximum order of derivative taken and the number of variables, we attain multiplicity codes with the desired rate and local decodability. To state the results on the existence of LDCs with rate approaching 1, we need the following definitions.

\begin{Def}
\textit{The \textbf{relative Hamming distance} of two strings $c,c'\in\Sigma^n$, is the fraction of coordinates in which they differ: $\delta_H(c,c')=\frac{\Delta_H(c,c')}{n}=\Pr_{i\in\N_n}\left[c_i\neq c_i'\right]$}.
\end{Def}

\begin{Def}[\textbf{Locally Self-Correctable Code}]
\textit{A code $\Cc\subseteq\Sigma^n$ is said to be \textbf{locally self-correctable} from $\delta'$-fraction errors with $t$ queries, if there is a randomized algorithm} $\Alg$ \textit{such that}:
\begin{itemize}
  \item \textbf{Self-Correction}: \textit{Whenever $\delta_H(r,c)<\delta'$ for $c\in\Cc$ and $r\in\Sigma^n$, then for each $i\in\N_n$}
  $$ \Pr\left[\Alg^r(i)=c_i\right]\geq2/3 $$
  \item \textbf{Query Complexity $t$}: $\Alg^r(i)$ \textit{always makes at most $t$ queries to $r$}
\end{itemize}
\textit{where} $\Alg^r$ \textit{represents the situation where $\Alg$ is given query access to $r$}.
\end{Def}

\begin{Def}[\textbf{Locally Decodable Code}]
\textit{Let $\Cc\subseteq\Sigma^n$ be a code with $|\Cc|=|\Sigma|^k$, and $E:\Sigma^k\to\Cc$ a bijection; which is $\Cc$'s encoding map. We say that $(\Cc,E)$ is \textbf{locally decodable} from $\delta'$-fraction errors with $t$ queries, if there is a randomized algorithm} $\Alg$ \textit{such that}:
\begin{itemize}
  \item \textbf{Decoding}: \textit{Whenever $\vec{m}\in\Sigma^k$ and $r\in\Sigma^n$ are such that $\delta_H(r,E(\vec{m}))<\delta'$, then for each $i\in\N_k$}
  $$ \Pr\left[\Alg^r(i)=m_i\right]\geq2/3 $$
  \item \textbf{Query Complexity $t$}: $\Alg^r(i)$ \textit{always makes at most $t$ queries to $r$}
\end{itemize}
\textit{where} $\Alg^r$ \textit{represents the situation where $\Alg$ is given query access to $r$}.
\end{Def}

$\ind$ Recall that any linear code has a systematic encoding, which means there is an encoding $E$ such that for each $\vec{m}\in\Sigma^k$ and $i\in\N_k$, there is a $j\in\N_n$ such that $E(\vec{m})_j=m_i$. This gives us the implication that if $\Cc$ is a LSCC, then $(\Cc,E)$ is a LDC, with the same fraction of errors $\delta'$ and query complexity $t$. We can view this implication as a reduction, which allows us to focus on constructing linear LSCCs. There is a caveat here, the fact that multiplicity codes are not linear codes. However, it is possibly to achieve linear LSCCs by concatenating multiplicity codes with suitable ``good'' linear codes over the small alphabet $\Sigma=\F_p$. The resulting LDCs have similar parameters. Furthermore, multiplicity codes themselves can also be locally decoded with a factor exp$(m+s)$-increase in the query complexity, for a suitable encoding $E$. Though obvious, it is also important to point out that local decoding is a function of the encoding $E$.\\
$\ind$ We now define multiplicity codes, state and prove their rate and distance, and then show how their local self-correction is achieved. The compelling part about the relationship between rate and distance, is that if we keep $\delta$ fixed and let the multiplicity parameter $s$ grow the rate improves, as it approaches $(1-\delta)^m$. For our constructions, we assume that $\Sigma=\F_q$. 

\begin{Def}
\label{mult_codes_def}
\textit{Let $s,d,m\in\N_0$ and} $\Sigma=\F_q^{{m+s-1 \choose m}}=\F_q^{|\{\ib:\text{wt}(\ib)<s\}|}$. \textit{For $P(\Xb)\in\F_q[\Xb]$ where $\Xb=(X_1,\cdots,X_m)$, and $\ab\in\F_q^m$, the \textbf{order $s$ evaluation of $P$ at $\ab$}, denoted $P^{(<s)}(\ab)$, is the vector} $\langle P^{(\ib)}(\ab)\rangle_{\text{wt}(\ib)<s}\in\Sigma$. \textit{The \textbf{multiplicity code of order $s$ evaluations of degree $d$ polynomials in $m$ variables over $\F_q$}, is the code over $\Sigma$ of length $n=q^m$ (where the coordinates are indexed by the elements $\ab\in\F_q^m$). For each $P(\Xb)\in\F_q[\Xb]$ with} deg$(P)\leq d$, \textit{there is a codeword in $\Cc$ given by the encoding}:
$$ \Enc_{s,d,m,q}(P)\coloneqq \left\langle P^{(<s)}(\ab)\right\rangle_{\ab\in\F_q^m}\in(\Sigma)^{q^m}. $$
\end{Def}

\begin{Lemma}[Rate and distance of multiplicity codes]
\textit{Let $\Cc$ be a multiplicity code of order $s$ evaluations of degree $d$ polynomials in $m$ variables over $\F_q$. Then $\Cc$ has $\delta=1-\frac{d}{sq}$ and $R=\frac{{d+m \choose m}}{{s+m-1 \choose m}q^m}$, for which $R\geq \left(\frac{s}{m+s}\right)^m\cdot\left(\frac{d}{sq}\right)^m\geq\left(1-\frac{m^2}{s}\right)\left(1-\delta\right)^m$}.
\end{Lemma}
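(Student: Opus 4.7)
The proof splits into three parts. For the distance claim, suppose $P, Q \in \F_q[\Xb]$ are distinct polynomials of degree $\leq d$ whose encodings agree at coordinate $\ab \in \F_q^m$. By linearity of the Hasse derivative (Definition~\ref{Hasse_der}), this is equivalent to $(P-Q)^{(\ib)}(\ab) = 0$ for every $\ib$ with wt$(\ib) < s$, i.e.\ mult$(P-Q, \ab) \geq s$. Setting $R \coloneqq P-Q$, a nonzero polynomial of degree at most $d$, the multiplicity-strengthened Schwartz--Zippel bound (\cite{DKSS13}, Lemma~8) yields $\big|\{\ab \in \F_q^m : \text{mult}(R,\ab) \geq s\}\big| \leq \tfrac{d}{s} q^{m-1}$. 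Dividing by $n = q^m$, distinct codewords agree in at most a $d/(sq)$-fraction of positions, so $\delta \geq 1 - d/(sq)$; equality is realized by, e.g., $R = X_1^d$. The only non-elementary input is the DKSS estimate, which I would invoke as a black box -- this is where the conceptual weight lies.

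For the rate formula, I would count $\F_q$-dimensions directly. The message space (polynomials in $m$ variables of total degree $\leq d$) has $\F_q$-dimension $\binom{d+m}{m}$, namely the number of monomials $\Xb^{\ib}$ with wt$(\ib) \leq d$, while each codeword symbol lies in $\Sigma = \F_q^{\binom{m+s-1}{m}}$ since by stars-and-bars there are $\binom{m+s-1}{m}$ vectors $\ib \in \N_0^m$ with wt$(\ib) < s$. Converting to $\Sigma$-symbols, the message length is $k = \binom{d+m}{m}/\binom{m+s-1}{m}$, so $R = k/n$ equals the stated ratio.

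For the two lower bounds on $R$, I would use elementary binomial estimates. The identity $\binom{d+m}{m} = \prod_{i=1}^m (d+i)/i \geq d^m/m!$ (each factor $d+i \geq d$) together with $\binom{m+s-1}{m} = \prod_{i=1}^{m}(s+i-1)/i \leq (s+m)^m/m!$ (each factor $s+i-1 \leq s+m$) yields
\[ R \;\geq\; \frac{d^m/m!}{\left((s+m)^m/m!\right) q^m} \;=\; \left(\frac{d}{(s+m)q}\right)^m \;=\; \left(\frac{s}{s+m}\right)^m \left(\frac{d}{sq}\right)^m, \]
which is the first quoted inequality. For the second, Bernoulli's inequality gives $\left(\tfrac{s}{s+m}\right)^m = \left(1 - \tfrac{m}{s+m}\right)^m \geq 1 - \tfrac{m^2}{s+m} \geq 1 - \tfrac{m^2}{s}$, and by the distance computation $(d/(sq))^m = (1-\delta)^m$; combining these gives $R \geq (1 - m^2/s)(1-\delta)^m$, completing the chain.
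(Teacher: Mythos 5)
Your proof is correct and follows essentially the same route as the paper: invoke the DKSS multiplicity Schwartz--Zippel bound for the distance, count monomials for $|\Cc|$ and $|\Sigma|$ to get the rate ratio, and bound the binomial coefficients (you bound numerator and denominator separately with $\binom{d+m}{m}\geq d^m/m!$ and $\binom{m+s-1}{m}\leq(s+m)^m/m!$; the paper pairs the factors term by term, but it amounts to the same estimate), then finish with Bernoulli. One small slip in an aside: you claim $R=X_1^d$ realizes equality $\delta=1-d/(sq)$, but for that polynomial the set $\{\ab:\mathrm{mult}(X_1^d,\ab)\geq s\}$ is just $\{a_1=0\}$ when $d\geq s$, a $1/q$-fraction rather than a $d/(sq)$-fraction; a genuine extremizer is something like $\prod_{j=0}^{d/s-1}(X_1-\beta_j)^s$ for distinct $\beta_j\in\F_q$. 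This does not affect the proof, since (like the paper) you only need and only establish $\delta\geq 1-d/(sq)$.
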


\begin{proof}
Consider two codewords $c_1=\Enc_{s,d,m,q}(P_1)$ and $c_2=\Enc_{s,d,m,q}(P_2)$ where $P_1\neq P_2$. For the coordinates $\ab\in\F_q^m$ where $c_1,c_2$ agree; i.e. $(c_1)_{\ab}=(c_2)_{\ab}$, we have $P_1^{(<s)}(\ab)=P_2^{(<s)}(\ab)$. Consequently, for any such $\ab$ we have $(P_1-P_2)^{(\ib)}(\ab)=0$ for each $\ib\in\{\ib:\text{wt}(\ib)<s\}$, thus mult$(P_1-P_2,\ab)\geq s$. From \cite{KSY10},\cite{DKSS13} lemmas 7 and 8 respectively, mult$(P_1-P_2,\ab)\geq s$ can occur  on a fraction of at most $\frac{d}{sq}$ points $\ab\in\F_q^m$. The minimum relative distance $\delta$ of $\Cc$ is therefore at least $\delta\geq 1-\frac{d}{sq}$.\\
$\ind$ We now compute the code's rate $R=\frac{\log|\Cc|}{n\log|\Sigma|}$. By definition \ref{mult_codes_def} our alphabet size is $q^{{m+s-1 \choose m}}$ and block-length is $n=q^m$, so it remains to calculate $|\Cc|$. A codeword is specified by giving coefficients to each of the monomials of degree at most $d$, thus $|\Cc|=q^{d+m \choose m}$. The rate is therefore
$$ R = \frac{{d+m \choose m}}{{s+m-1 \choose m}q^m} = \frac{\prod_{j=0}^{m-1}(d+m-j)}{\prod_{j=1}^{m}\big((s+m-j)q\big)} \geq \left(\frac{1}{1+\frac{m}{s}}\right)^m\cdot\left(\frac{d}{sq}\right)^m \geq \left(1-\frac{m^2}{s}\right)\left(1-\delta\right)^m. $$
\end{proof}

$\ind$ Using the parameters of definition \ref{mult_codes_def}, let $r:\F_q^m\to\Sigma$ be a received word for $\Sigma$ our code's alphabet. Suppose $P(\Xb)\in\F_q[\Xb]$ has deg$(P)\leq d$ such that $\delta_H\left(r,\Enc_{s,d,m,q}(P)\right)$ is small, and let $\ab\in\F_q^m$. Before showing how to locally recover $P^{(<s)}(\ab)$ (algorithm \ref{simpl_LSC_alg}) when given oracle access to $r$, we establish two relationships between the derivatives of the restriction of $P$ to a line to the derivatives of $P$ itself. Fix $\ab,\bb\in\F_q^m$ for $\bb\neq0$, and consider the polynomial $Q(T)=P(\ab+\bb T)$.
\begin{itemize}
  \item \textbf{Relationship of $Q(T)$ with the derivatives of $P$ at $\ab$}: By \ref{Hasse_der}: $Q(T)=\sum\limits_{\ib}P^{(\ib)}(\ab)\bb^{\ib}T^{\text{wt}(\ib)}$. By grouping terms: $\left(\sum\limits_{\ib|\text{wt}(\ib)=e}P^{(\ib)}(\ab)\bb^{\ib}\right)=$ coefficient of the monomial $T^e$ in $Q(T)$.
  \item \textbf{Relationship of derivatives of $Q(T)$ at $t$ with the derivatives of $P$ at $\ab+\bb t$}: By \ref{Hasse_der}: $P\big(\ab+\bb(t+R)\big)=Q(t+R)=\sum\limits_{j}Q^{(j)}(t)R^j$ and $P\big(\ab+\bb(t+R)\big)=\sum\limits_{\ib}P^{(\ib)}(\ab+\bb t)(\bb R)^{\ib}$, for $t\in\F_q$. Thus: $Q^{(j)}(t)=\left(\sum\limits_{\ib|\text{wt}(\ib)=j}P^{(\ib)}(\ab+\bb t)\bb^{\ib}\right)$. More precisely, $Q^{(j)}(t)$ is a linear combination of various $P^{(\ib)}(\ab+\bb t)$ with coefficients $\bb^\ib$, over different $\ib$ of weight $j$.
\end{itemize}

Recall that we want to recover $P^{(<s)}(\ab)$, where $P(\Xb)$ is such that $\Enc_{s,d,m,q}(P)$ is close to $r$, i.e. $\delta_H\left(r,\Enc_{s,d,m,q}(P)\right)$ is small. We denote the $\ib$ coordinate of $r(\ab)$ by $r^{(\ib)}(\ab)$. This is done by algorithm \ref{simpl_LSC_alg}.

\begin{algorithm}[H]
\label{simpl_LSC_alg} 
\SetAlgoLined
\KwIn{received word $r:\F_q^m\to\Sigma$, and point $\ab\in\F_q^m$}
\KwOut{Vector $\langle u_\ib\rangle_{\text{wt}(\ib)<s}$}
  \begin{enumerate}
    \item \textbf{Pick a set of directions $B$}: Choose $B\subseteq\F_q^m\backslash\{\bold{0}\}$ uniformly at random, of size ${m+s-1 \choose m}$.
    \item \textbf{Recover $P(\ab+\bb T)$ for $\bb\in B$}: For each $\bb\in B$ consider $\ell_\bb:\F_q\to\F_q^s$ given by
    $$ \big(\ell_\bb(t)\big)_j = \sum\limits_{\ib|\text{wt}(\ib)=j}r^{(\ib)}(\ab+\bb t)\bb^{\ib}. $$
    Find $Q_\bb(T)\in\F_q[T]$ with deg$(Q_\bb)\leq d$ (if any), s.t. $\delta_H(\Enc_{s,d,1,q}(Q_\bb),\ell_\bb)<\delta/2$.
    \item \textbf{Solve a linear system to recover $P^{(<s)}(\ab)$}: For each $e\in\{0,1,\cdots,s-1\}$ consider the system of equations in the variables $\langle u_\ib\rangle_{\text{wt}(\ib)=e}$ (with one equation for each $\bb\in B$):
    \begin{equation}
    \label{step_3_lin_syst}
      \left(\sum\limits_{\ib|\text{wt}(\ib)=e}\bb^{\ib}u_\ib\right)= \text{ coefficient of } T^e \text{ in } Q_\bb(T).
    \end{equation}
    Find all $\langle u_\ib\rangle_{\text{wt}(\ib)=e}$ satisfying the system (\ref{step_3_lin_syst}).
    \item \textbf{Existence and uniqueness}:
    If the solution does \textit{not} exist or is \textit{not} unique, output \textbf{FAIL}.
  \end{enumerate}
\caption{Simplified Local Self-Correction of Multiplicity Codes}
\end{algorithm}

\begin{Claim}
\textit{Algorithm \ref{simpl_LSC_alg} is a local-self corrector from a $\frac{\delta}{100w}$-fraction of errors, for $w={m+s-1 \choose m}$. Overall, it outputs $P^{(\ib)}(\ab)$ with probability at least $\left(\frac{9}{10}\right)^2\simeq 0.8$}.
\end{Claim}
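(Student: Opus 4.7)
The plan is to decompose the analysis into two essentially independent failure modes, each occurring with probability at most $1/10$: (i) some line direction $\bb \in B$ has too many errors for the univariate multiplicity decoder in Step 2 to succeed, and (ii) the random set $B$ fails to make every linear system in Step 3 have a unique solution. A union-bound multiplication of the two complementary events yields the claimed $(9/10)^2$ success probability.

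First I would bound the Step 2 failure probability. For a uniformly random $\bb \in \F_q^m\setminus\{\bold{0}\}$, each point of $\F_q^m\setminus\{\ab\}$ lies on the line $L_\bb = \{\ab+\bb t : t \in \F_q^\times\}$ with probability $(q-1)/(q^m-1)$. Combined with the hypothesis $\delta_H(r, \Enc_{s,d,m,q}(P)) < \delta/(100w)$, this gives that the expected number of coordinates in $L_\bb \cup \{\ab\}$ where $r$ disagrees with the correct order-$s$ evaluation of $P$ is at most $\delta q/(100w) + 1$. By the first relationship recalled just before the algorithm, whenever $r(\ab+\bb t)$ is correct, $\ell_\bb(t)$ coincides with the correct order-$s$ evaluation of $Q_\bb(T) \coloneqq P(\ab+\bb T)$ at $t$. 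Hence $\ell_\bb$ differs from $\Enc_{s,d,1,q}(Q_\bb)$ in at most $\delta q/(100w) + 1$ positions in expectation. Markov's inequality gives
$$ \Pr\bigl[\ell_\bb \text{ has } \geq \delta q/2 \text{ errors vs. } \Enc_{s,d,1,q}(Q_\bb)\bigr] \leq \frac{\delta q/(100w) + 1}{\delta q/2} \leq \frac{1}{50w} + \frac{2}{\delta q}, $$
and a union bound over $\bb \in B$ yields total Step 2 failure probability at most $1/50 + 2w/(\delta q) \leq 1/10$ for $q$ in the intended regime. Since $\Enc_{s,d,1,q}$ has minimum distance $\delta$, half-the-distance decoding on $\ell_\bb$ uniquely recovers $Q_\bb$ whenever the line has fewer than $\delta q/2$ errors.

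Next I would handle Step 3 conditioned on all $Q_\bb$ being correct. Evaluating the second relationship at $t=0$ shows that for every $e \leq s-1$ and every $\bb \in B$, the coefficient of $T^e$ in $Q_\bb$ equals $\sum_{\text{wt}(\ib)=e} P^{(\ib)}(\ab)\,\bb^{\ib}$, so $u_\ib = P^{(\ib)}(\ab)$ is always a solution of (\ref{step_3_lin_syst}). Uniqueness is then equivalent to the coefficient matrix $M_e \coloneqq (\bb^{\ib})_{\bb \in B,\,\text{wt}(\ib)=e}$ having full column rank simultaneously for every $e \leq s-1$. Any nontrivial kernel vector of $M_e$ corresponds to a nonzero homogeneous polynomial of degree $e$ in $m$ variables vanishing on all of $B$. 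By Schwartz-Zippel, each such polynomial vanishes at a uniformly random $\bb$ with probability at most $e/q \leq s/q$, and since $|B|=w \geq \binom{m+e-1}{m-1}$, a union bound over the obstructions in each degree and over $e$ yields a Step 3 failure probability bounded by $\sum_{e<s} q^{\binom{m+e-1}{m-1}}(s/q)^w$, which is at most $1/10$ whenever $q$ is sufficiently large relative to $m,s$, i.e. the regime in which the multiplicity code is interesting.

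The main obstacle is this last genericity argument for Step 3: one must verify that a random set of $w = \binom{m+s-1}{m}$ directions is, with high probability, simultaneously unisolvent for every homogeneous component of degree $\leq s-1$, and this pins down the admissible range of $q$ relative to $m$ and $s$. Granting this, the two independent bounds combine via $(1 - 1/10)(1 - 1/10) = (9/10)^2$ to give the stated overall success probability of approximately $0.8$.
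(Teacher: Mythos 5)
Your decomposition — bound the probability that every line $L_\bb$ for $\bb\in B$ has fewer than $\delta/2$ errors, argue that unique decoding of the univariate multiplicity code then yields $Q_\bb=P(\ab+\bb T)$, and finally argue uniqueness of the Step~3 solution via the interpolating-set property of $B$ — is exactly the paper's three-step analysis, and the $(9/10)\cdot(9/10)$ combination at the end is the same. Where you go beyond the paper is the last step: the paper simply asserts that a random $B$ of size $w$ is an interpolating set for degree $<s$ with probability $\geq 9/10$ for $q$ sufficiently large (citing a reference), while you supply a concrete Schwartz--Zippel plus union-bound justification; the bound $\sum_{e<s}q^{\binom{m+e-1}{m-1}}(s/q)^w$ is crude but does vanish once $q\gg s^{(m+s-1)/(s-1)}$, which is indeed the "$q$ large in terms of $m,s$" regime the paper invokes, so this is a legitimate way to discharge the cited fact. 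One small slip: the statement "whenever $r(\ab+\bb t)$ is correct, $\ell_\bb(t)$ coincides with the order-$s$ evaluation of $Q_\bb$" follows from the \emph{second} of the two relationships displayed before the algorithm (the one expressing $Q^{(j)}(t)$ in terms of $P^{(\ib)}(\ab+\bb t)$), not the first; the first is what you correctly use in Step~3.
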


$\ind$ We validate the above claim, by analyzing the three steps of the algorithm. Fix a received word $r:\F_q^m\to\Sigma$ and $\ab\in\F_q^m$, and let $P(\Xb)$ be a polynomial such that $\delta_H\big(\Enc_{s,d,m,q}(P),r\big)<\frac{\delta}{100w}$. We call the points where $r$ and $\Enc_{s,d,m,q}(P)$ differ the ``errors''.\\

\underline{\textbf{Step 1 -- All $\bb\in B$ are ``good''}}: For a fixed $\bb\in\F_q^m\backslash\{\bold{0}\}$, we are interested in the fraction of errors on the line $L_{\bb}=\{\ab+\bb t\mid t\in\F_q^{\times}\}$ through $\ab$ in direction $\bb$. Considering the space $\F_q^m$, the lines defined from the points in $B$ cover $\F_q^m\backslash\{\ab\}$ uniformly. By this, at most $\frac{1}{50w}$ of the lines containing $\ab$ have more than a $\frac{\delta}{2}$-fraction error on them. Therefore
$$ \Pr_{B\gets\mathcal{P}_w\left(\F_q^m\backslash\{\bold{0}\}\right)}\left[\text{all $\bb\in B$ will be s.t. $L_{\bb}$ through $\ab$ has $\frac{\delta}{2}$ errors on it}\right]\geq \frac{9}{10} $$
where $\mathcal{P}_w$ denotes the subsets of cardinality $w$.

\underline{\textbf{Step 2 -- $Q_\bb(T)=P(\ab+\bb T)$ for each $\bb\in B$}}: In the case where $B$ satisfies the above event, by the third equation of the relationship relating the derivatives of $Q$ and $P$ (identity for $Q^{(j)}(t)$), for each $\bb$, the corresponding $\ell_\bb$ will satisfy $\delta_H\big(\Enc_{s,d,1,q}(P(\ab+\bb T)),\ell_\bb\big)<\delta/2$. Thus, for each $\bb\in B$, the algorithm will find $Q_{\bb}(T)=P(\ab+\bb T)$.\\

\underline{\textbf{Step 3 -- $u_\ib=P^{(\ib)}(\ab)$ for each $\ib$}}: Since $Q_\bb(T)=P(\ab+\bb T)$ for each $\bb\in B$, by the second identity of our first relationship, we get that for each $e\in\{0,1,\cdots,s-1\}$ the vector $\langle u_\ib\rangle_{\text{wt}(\ib)=e}$ with $u_\ib=P^{(\ib)}(\ab)$ will satisfy all the equations in the system (\ref{step_3_lin_syst}), which solution is unique. Furthermore
$$ \Pr_{B\gets\mathcal{P}_w\left(\F_q^m\backslash\{\bold{0}\}\right)}\Big[\text{the elements of $B$ form an \underline{interpolating set} for polynomials of degree $<s$}\Big]\geq\frac{9}{10} $$
which holds as long as $q$ is large enough in terms of $m$ and $s$. In particular, no $P(\Xb)\in\F_q[\Xb]\backslash\{0\}$ of deg$(P)<s$ vanishes on all $\bb\in B$. If the solution to (\ref{step_3_lin_syst}) was not unique and had distinct solutions $u_\ib$ and $u_\ib'$, then $\langle u_\ib-u_\ib'\rangle_{\text{wt}(\ib)=e}$ would be the vector of coefficients of a polynomial $\tilde{P}(\Xb)\in\F_q[\Xb]\backslash\{0\}$ of deg$(\tilde{P})<s$ which vanishes on all $\bb\in B$. Therefore
$$ \left(\sum\limits_{\ib|\text{wt}(\ib)=e}\tilde{P}^{(\ib)}(\ab)\bb^{\ib}\right) = \left(\sum\limits_{\ib|\text{wt}(\ib)=e}\left(u_\ib-u_\ib'\right)\bb^{\ib}\right) = 0 $$
which contradicts the fact that $B$ is an \textbf{interpolating set} for polynomials of degree $<s$ (a subset of $\F_q^m$, for which if we are given $\{q(\bb)\}_{\bb\in B}$ for $q(\Xb)\in\F_q[\Xb]$, we can reconstruct $q(\Xb)$ \cite{DS08}).\\

$\ind$ A central result of \cite{KSY10} (theorem 10) states that for $q\geq \max\{10m,\frac{d+6}{s},5(s+1)\}$ and $\delta=1-\frac{d}{sq}$, $\Cc$ is locally self-correctable from $\frac{\delta}{10}$-fraction errors with $q\cdot O(s)^m$ queries. The proof of the theorem in which this statement appears uses a slightly different algorithm for local self-correction of multiplicity codes. For more details and further results on multiplicity codes, please refer to \cite{KSY10}, \cite{Kop13}, and \cite{Kop15}; in which another explicit capacity-achieving list decodable code was developed.

\section{Derivative Codes}
\label{Der_Codes_sec}

We present one last family of codes, \textit{derivative codes} \cite{GW11},\cite{GW13}, which we relate to the others presented thus far. This gives an alternate construction to FRS codes, for achieving the optimal trade-off between rate and list decoding error-correction radius. Informally, rather than bundling evaluations of the message polynomial $f(X)$ at consecutive powers of $\gamma$ as in (\ref{frs_eq}), in an order-$m$ derivative code, we bundle the evaluations of $f(X)$ along with its first $(m-1)$ derivatives at each point of the defining set of points $\A=\{\alpha_1,\cdots,\alpha_n\}\subseteq\F_q$. This resemblance makes this construction arguably just as natural as that of FRS codes. An interesting artifact of this construction is that the rate does not decrease, as one can pick higher degree polynomials; while still maintaining the distance. The reason is that two distinct polynomials of degree $\ell$ and their first $(m-1)$ derivatives, can agree in at most $\ell/m$ points.\\
$\ind$ The list decoding of derivative codes involves an interpolation step, and a second step of retrieving the list of polynomials satisfying a certain algebraic condition, similar to what we saw for FRS codes. The first step consists of fitting a polynomial of the form (\ref{Q_interp}). The second step which is new to us, consists of solving a ``differential equation''. This was also considered in \cite{Kop15}, where the power series expansion of the potential solution was used to solve the same differential equation. Without further ado, let us define derivative codes.

\begin{Def}
\label{Der_codes_def}
\textit{Let $m\in\N_0$ and $\alpha_1,\cdots,\alpha_n\in\F_q$ be distinct, and the parameters satisfy $m\leq k<nm\leq q$. Further assume that} char$(\F_q)>k$. \textit{The \textbf{$m^{th}$ order derivative code}} $\Der_q^{(m)}[n,k]$ \textit{over the alphabet $\F_q^m\cong\F_{q^m}$, encodes the polynomial $f(X)\in\F_q[X]$ with} deg$(f)=k-1$ \textit{by}
\begin{equation}
\label{Der_code_eq}
  {\footnotesize f(X) \mapsto \left( \begin{bmatrix} f(\alpha_1) \\ f'(\alpha_1) \\ \vdots \\ f^{(m-1)}(\alpha_1) \end{bmatrix}, \begin{bmatrix} f(\alpha_2) \\ f'(\alpha_2) \\ \vdots \\ f^{(m-1)}(\alpha_2) \end{bmatrix}, \cdots, \begin{bmatrix} f(\alpha_n) \\ f'(\alpha_n) \\ \vdots \\ f^{(m-1)}(\alpha_n) \end{bmatrix} \right) \cong \begin{bmatrix} f(\alpha_1) & \cdots & f(\alpha_n) \\ f'(\alpha_1) & \cdots & f'(\alpha_n) \\ \vdots & \ddots & \vdots \\ f^{(m-1)}(\alpha_1) & \cdots & f^{(m-1)}(\alpha_n) \end{bmatrix}}
\end{equation}
\textit{where $f'(X)$ denotes the formal derivative of $f(X)$, and $f^{(i)}(X)$ its $i^{th}$ formal derivative. This codes has length $n$, rate $R=\frac{k}{nm}$ and minimum distance $d_{\text{min}}=n-\floor{\frac{k-1}{m}}\simeq(1-R)n$. Furthermore, for $m=1$ we get a} $\rs_q[n,k]$.
\end{Def}

$\ind$ Consider the received corrupted codeword from $\Der_q[n,k]$ as a string $\yb\in(\F_q^m)^n\cong\F_q^{m\times n}$, which we realize as a $m\times n$ matrix over $\F_q$; as we did for (\ref{rec_matrix_cw}).
Just like in \cref{int_step_subsec_frs}, the goal is to recover all polynomials $f(X)$ of degree $k-1$ whose encoding (\ref{Der_code_eq}) agrees with $\yb$ in at least $t$ columns. This corresponds to decoding from $n-t$ symbol errors for $\Der_q^{(m)}[n,k]$. The algorithm we present, as the one in \cref{FRS_sec}, may be viewed as a higher dimensional analog of the Berlekamp-Welch algorithm. 

\subsection{Interpolation step}
\label{int_step_subsec_der}  

The interpolation step is similar in spirit to the one presented in \cref{int_step_subsec_frs}. Using the same notation, let 
$$ \W=\big\{B_0(X)+B_1(X)Y_1+\cdots+B_m(X)Y_m\mid B_i\in\F_q[X]\big\} $$
which is a $\F_q$-linear subspace $\F_q[X,\Yb]$. For $p(X)\in\F_q[X]$ and $i\in\N_m$, we define the $\F_q$-linear map
$$ D:p(X)\longmapsto p'(X) = \left(B_0'(X)+\sum\limits_{i=1}^mB_i'(X)Y_i\right) \qquad \text{and} \qquad D:p(X)Y_i\longmapsto \Big(p'(X)Y_i+p(X)Y_{i+1}\Big) $$
from $\F_q[X]$ to $\W$, where we take $Y_{m+1}=Y_1$.\\
$\ind$ For $s\in\N_m$, we define the \textit{nonzero} polynomial $Q(X,\Yb)$ as in (\ref{Q_interp}), satisfying the conditions
\begin{equation}
\label{int_cond_der}
  Q(\alpha_i,y_{1i},\cdots,y_{si})=0 \qquad \text{ and } \qquad (D^kQ)(\alpha_i,y_{1i},\cdots,y_{mi})=0
\end{equation}
for all $i\in\N_n$, where $k\in\N_{m-s}$ and $D^k$ denotes the $k$-fold composition of $D$ (apply it $k$ times). Note that conditions (\ref{int_cond_der}) resemble (\ref{int_cond_frs}). Furthermore, note that for each $i$ the conditions (\ref{int_cond_der}) comprises a collection of $(m-s+1)$ homogeneous linear constraints on the coefficients of $Q$.\\
$\ind$ The next two lemmas show why the conditions suffice, and that $Q$ exists and can be found efficiently. The proofs are relatively simple. For the first substitutions $Y_i=f^{(i-1)}(X)$ take place, and for the second it suffices to solve a homogeneous linear system imposed on the coefficients of $Q$ with at most $nm$ constraints. The details can be found in \cite{GW11}. Once again, there is a resemblance between (\ref{s_list_cond}) an (\ref{der_int_lem_eq}).

\begin{Lemma}
\label{der_int_lem_1}
\textit{Suppose $Q$ of the form (\ref{Q_interp}) satisfies (\ref{int_cond_der}). If the received word $\yb$ agrees with the encoding (\ref{Der_code_eq}) at location $i$, i.e. $f^{(j)}(\alpha_i)=y_{j+1,i}$ for $0\leq j<m$ (row $j+1$ of $\yb$), then the polynomial
\begin{equation}
\label{der_int_lem_eq}
  \hat{Q}(X)\coloneqq Q\big(X,f(X),f'(X),\cdots,f^{(s-1)}(X)\big) \quad \text{ satisfies }  \quad \hat{Q}(\alpha_i)=0 \ \text{ and } \ \hat{Q}^{(k)}(\alpha_i)=0 
\end{equation}
for all $k\in\N_{m-s}$, where $\hat{Q}^{(k)}(X)$ is the $k^{th}$ derivative of $\hat{Q}$}.
\end{Lemma}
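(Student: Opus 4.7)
The plan is to show that $D$ is engineered to encode the formal chain rule under the substitution $Y_i\mapsto f^{(i-1)}(X)$, and then read off both conclusions directly from the interpolation conditions (\ref{int_cond_der}).

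The first step is to verify, for $Q=A_0(X)+\sum_{i=1}^{s}A_i(X)Y_i$ of the form (\ref{Q_interp}), the chain-rule identity
$$\frac{d}{dX}\bigl[Q\bigl(X,f(X),f'(X),\ldots,f^{(s-1)}(X)\bigr)\bigr] = (DQ)\bigl(X,f(X),f'(X),\ldots,f^{(s)}(X)\bigr).$$
This follows by direct computation on generators: the left side equals $A_0'(X)+\sum_{i=1}^{s}\bigl[A_i'(X)f^{(i-1)}(X)+A_i(X)f^{(i)}(X)\bigr]$, and applying the defining rules $D(p)=p'$ and $D(pY_i)=p'Y_i+pY_{i+1}$ to $Q$ and then evaluating at $Y_j=f^{(j-1)}(X)$ produces the same expression. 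A straightforward induction on $k$ then yields
$$\hat{Q}^{(k)}(X) = (D^{k}Q)\bigl(X,f(X),f'(X),\ldots,f^{(s+k-1)}(X)\bigr) \qquad \text{for all } k\geq 0.$$
Throughout the induction $k\leq m-s$, so $s+k\leq m$ and the cyclic convention $Y_{m+1}=Y_1$ is never invoked.

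The conclusion is then immediate. The agreement hypothesis at column $i$ says exactly that $y_{j,i}=f^{(j-1)}(\alpha_i)$ for $1\leq j\leq m$, so substituting $X=\alpha_i$ into the identity above gives
$$\hat{Q}^{(k)}(\alpha_i) = (D^{k}Q)\bigl(\alpha_i,y_{1,i},y_{2,i},\ldots,y_{s+k,i}\bigr).$$
For $k=0$ the right-hand side is precisely the first part of (\ref{int_cond_der}); for $1\leq k\leq m-s$ we have $s+k\leq m$, so it is one of the $(m-s)$ quantities required to vanish by the second part of (\ref{int_cond_der}). Hence $\hat{Q}^{(k)}(\alpha_i)=0$ in all required cases.

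I do not anticipate any genuine obstacle: the only step that needs actual thought is the inductive chain-rule identity, and that reduces to checking the two generating cases $Q=p(X)$ and $Q=p(X)Y_i$ before extending by $\F_q$-linearity. The standing hypothesis $\mathrm{char}(\F_q)>k$ is not used in the proof of this particular lemma, though it will matter elsewhere to ensure that the formal derivatives $f,f',\ldots,f^{(m-1)}$ of a degree-$(k-1)$ polynomial retain enough information to distinguish codewords.
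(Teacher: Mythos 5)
Your proof is correct and follows the same route the paper sketches (the paper says only that one substitutes $Y_i = f^{(i-1)}(X)$ and defers details to the reference): the key observation in both cases is that $D$ is built to satisfy the chain-rule identity $\frac{d}{dX}\hat{Q}(X) = (DQ)(X,f(X),\ldots,f^{(s)}(X))$, from which the interpolation conditions (\ref{int_cond_der}) directly give the vanishing at $\alpha_i$. You have also correctly noted the small but necessary point that the bound $k\le m-s$ keeps $s+k\le m$, so the cyclic convention $Y_{m+1}=Y_1$ in the definition of $D$ never interferes.
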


\begin{Lemma}
\label{der_int_lem_2}
\textit{Let $d$ be as in (\ref{d_express}), except that $N$ is replaced with our current block length $n$. Then, a nonzero $Q$ of the form (\ref{Q_interp}) satisfying (\ref{int_cond_der}), with} deg$(A_0)\leq d+k-1$ \textit{and} deg$(A_i)\leq d$ \textit{for $j\in\N_s$ exists, and can be found in $O\left((nm)^3\right)$ field operations over $\F_q$}.
\end{Lemma}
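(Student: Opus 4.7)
The plan is to treat (\ref{int_cond_der}) as a homogeneous linear system in the coefficients of $Q$, count dimensions to force a nonzero solution, and then invoke Gaussian elimination for the complexity claim. This mirrors the strategy used for \cref{int_step_subsec_frs}, with the additional ingredient that $D$ (and hence each iterate $D^\ell$) is $\F_q$-linear.

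First I would count the unknowns. Under the prescribed degree bounds, $A_0(X)$ contributes $d+k$ coefficients in $\F_q$, while each $A_i(X)$ for $i\in\N_s$ contributes $d+1$ coefficients, so the total number of unknowns in $Q(X,\Yb)$ is
\[
(d+k)+s(d+1)=(d+1)(s+1)+k-1,
\]
exactly as in \cref{int_step_subsec_frs}. Next I would argue that (\ref{int_cond_der}) consists of homogeneous linear constraints on these coefficients. Fix $i\in\N_n$. The equation $Q(\alpha_i,y_{1i},\ldots,y_{si})=0$ is visibly linear in the coefficients of $Q$. Because $D$ is $\F_q$-linear on $\W$ (linearity is built into its prescription on both generator types $p(X)$ and $p(X)Y_j$), each iterate $D^\ell$ is also $\F_q$-linear, and composing with evaluation at a point of $\F_q^{m+1}$ preserves linearity; hence every equation $(D^\ell Q)(\alpha_i,y_{1i},\ldots,y_{mi})=0$ for $\ell\in\N_{m-s}$ is linear in the coefficients of $Q$ as well. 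For each $i$ this yields $m-s+1$ equations (the value condition plus one for each $\ell\in\N_{m-s}$), for a grand total of $n(m-s+1)$ homogeneous linear equations.

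With $d$ chosen as in (\ref{d_express}) with $N$ replaced by $n$, the defining inequality $d(s+1)\leq n(m-s+1)-k+1<(d+1)(s+1)$ gives
\[
(d+1)(s+1)+k-1>n(m-s+1),
\]
so there are strictly more unknowns than equations, and the system is forced to have a nonzero solution, producing the desired $Q$. For the complexity claim, the coefficient matrix has $n(m-s+1)=O(nm)$ rows and $(d+1)(s+1)+k-1=O(nm)$ columns (using $d=O\bigl(n(m-s+1)/(s+1)\bigr)$ and $k\leq nm$). Its entries can be assembled by expanding $D^\ell$ on the monomial basis of $\W$ and then evaluating at the data $(\alpha_i,y_{1i},\ldots,y_{mi})$; standard Gaussian elimination then produces a nonzero kernel vector in $O\bigl((nm)^3\bigr)$ field operations over $\F_q$.

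The main obstacle I anticipate is not the dimension count or the linear algebra, but rather obtaining a clean closed-form description of $D^\ell$ on the monomial basis of $\W$, since $D$ couples formal differentiation in $X$ with a cyclic shift $Y_j\mapsto Y_{j+1}$ (with $Y_{m+1}=Y_1$) on the $\Yb$-variables, and the product rule produces many cross terms when iterating. Once these coefficients are written down explicitly (or merely shown to be computable in $\mathrm{poly}(nm)$ time), the proof reduces to the counting argument above together with black-box Gaussian elimination.
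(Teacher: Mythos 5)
Your proposal is correct and takes essentially the same route as the paper, which merely sketches this lemma by noting that (\ref{int_cond_der}) is a homogeneous linear system with at most $nm$ constraints on the coefficients of $Q$ and defers details to \cite{GW11}. Your dimension count (unknowns $(d+1)(s+1)+k-1$ versus $n(m-s+1)$ constraints, with the choice of $d$ forcing more unknowns than equations) and the $O((nm)^3)$ Gaussian-elimination bound are exactly the intended details, and your observation that $\F_q$-linearity of $D$ propagates to every $D^\ell$ is the key point making the constraints linear.
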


\subsection{Retrieve candidate polynomials}
\label{retr_cand_pol_subs}

Now that we know how to find a polynomial $Q(X,\Yb)$ satisfying (\ref{int_cond_der}), it remains to list the polynomials $f(X)$ which agree in sufficiently many locations with the received word $\yb$. The following lemma gives an identity which should be satisfied by these candidate polynomials.

\begin{Lemma}
\label{lemma_retr_cand_pols}
\textit{If $f(X)\in\F_q[X]$ has degree at most $k-1$ and an encoding (\ref{Der_code_eq}) agreeing with the received word $\yb$ in at least $t>\frac{d+k-1}{m-s+1}$ columns, then}
\begin{equation}
\label{retr_cand_pols_cond}
  \hat{Q}(X)=Q\big(X,f(X),f'(X),\cdots,f^{(s-1)}(X)\big)=0.
\end{equation}
\end{Lemma}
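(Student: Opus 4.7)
The plan is to mirror the argument used for Lemma \ref{alg_cond_msg} in the FRS setting, but now exploiting that agreement at one column forces a \emph{high-multiplicity} root of $\hat{Q}(X)$, rather than a simple root. Concretely, I would first invoke Lemma \ref{der_int_lem_1}: for every column index $i$ at which the encoding of $f$ agrees with $\yb$, the polynomial $\hat{Q}(X) = Q\bigl(X,f(X),f'(X),\dots,f^{(s-1)}(X)\bigr)$ satisfies $\hat{Q}(\alpha_i) = 0$ and $\hat{Q}^{(k)}(\alpha_i) = 0$ for every $k \in \N_{m-s}$. Since $\alpha_i$ is a zero of $\hat{Q}$ together with its first $m-s$ formal derivatives, and since $\text{char}(\F_q) > k \geq \deg(\hat{Q})$ by the standing hypothesis on the code, this translates (via the usual correspondence between vanishing of formal derivatives and multiplicity in a field of large enough characteristic) to $\alpha_i$ being a root of $\hat{Q}(X)$ of multiplicity at least $m-s+1$.

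Next I would estimate $\deg(\hat{Q})$. From (\ref{Q_interp}) we have $\deg(A_0)\leq d+k-1$ and $\deg(A_i)\leq d$ for $i\in\N_s$. Substituting $Y_i = f^{(i-1)}(X)$, each summand $A_i(X) f^{(i-1)}(X)$ has degree at most $d + (k-1) - (i-1) \leq d+k-1$, so $\deg(\hat{Q}) \leq d+k-1$. Assume $f$ agrees with $\yb$ in at least $t$ columns, with $t > \frac{d+k-1}{m-s+1}$. Counting zeros of $\hat{Q}$ with multiplicity, the $t$ distinct agreement locations $\alpha_i$ contribute at least
\[
 t\cdot(m-s+1) \;>\; d+k-1 \;\geq\; \deg(\hat{Q}),
\]
total zeros. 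By the fundamental theorem of algebra over $\F_q$, this forces $\hat{Q}(X)\equiv 0$, which is exactly (\ref{retr_cand_pols_cond}).

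The only genuinely delicate point, and the one I would treat most carefully, is the passage from vanishing of $\hat{Q}$ and its first $m-s$ formal derivatives at $\alpha_i$ to a zero of multiplicity $m-s+1$. In characteristic $p$ the formal derivative can annihilate high-degree monomials, so in general one should work with Hasse derivatives; the hypothesis $\text{char}(\F_q) > k$ is precisely what makes ordinary and Hasse derivatives agree up to order $k$, and thus makes the naive multiplicity count legitimate here. Once this subtlety is checked, the remainder of the argument is a direct degree-versus-zeros comparison, and the inequality $t > (d+k-1)/(m-s+1)$ in the hypothesis is seen to be exactly tight for forcing $\hat{Q}\equiv 0$.
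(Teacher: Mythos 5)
Your proposal is correct and supplies what the paper leaves implicit. The paper dismisses this lemma as ``identical to Lemma~\ref{alg_cond_msg}, with the only difference that we substitute $f(\gamma^{i-1}X)$ with $f^{(i-1)}(X)$,'' but that understates the change: the proof of Lemma~\ref{alg_cond_msg} is given under the simplifying assumption $s=m$, so each agreement column contributes a single simple zero of $R(X)$ and $t\geq d+k$ suffices. Here $s$ may be strictly smaller than $m$, and the only reason the weaker threshold $t>(d+k-1)/(m-s+1)$ is enough is that each agreement column now contributes a zero of $\hat{Q}$ of multiplicity at least $m-s+1$. You supply exactly that: invoking Lemma~\ref{der_int_lem_1} to get vanishing of $\hat{Q}$ and its first $m-s$ formal derivatives at $\alpha_i$, bounding $\deg(\hat{Q})\leq d+k-1$, and closing by counting zeros with multiplicity.

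One small misstatement in your ``delicate step.'' You write ``$\text{char}(\F_q)>k\geq\deg(\hat{Q})$,'' but $\deg(\hat{Q})$ can be as large as $d+k-1$, which generically exceeds $k$, so that chain of inequalities is false. Fortunately the hypothesis you actually need is weaker and still available: to pass from vanishing of the formal derivatives $\hat{Q}',\dots,\hat{Q}^{(m-s)}$ at $\alpha_i$ to the Hasse-derivative (hence genuine multiplicity) statement, it suffices that $j!\neq 0$ in $\F_q$ for $0\leq j\leq m-s$, i.e.\ $\text{char}(\F_q)>m-s$. Since the standing hypotheses of Definition~\ref{Der_codes_def} give $\text{char}(\F_q)>k\geq m>m-s$, the conclusion stands; just be careful not to conflate the degree of $\hat{Q}$ with the order of the derivatives at issue.
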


$\ind$ Lemma \ref{lemma_retr_cand_pols} is identical to lemma \ref{alg_cond_msg}, with the only difference that we substitute $f{(\gamma^{i-1}X)}$ with $f^{(i-1)}(X)$ for all $i\in\N_s$.With our choice of $d$, it follows that any $f(X)$ which agrees with $\yb$ on
$$ t>\frac{d+k-1}{m-s+1}\geq\frac{\left[\frac{n(m-s+1)-k+1}{s+1}+k-1\right]}{m-s+1}=\frac{n}{s+1}+\frac{k-1}{m-s+1}\cdot\left(1-\frac{1}{s+1}\right)=\frac{n}{s+1}+\frac{k-1}{m-s+1}\cdot\frac{s}{s+1} $$
columns satisfies (\ref{retr_cand_pols_cond}). Similarly to \cref{root_find_subsec}, our second step now is to find all polynomials $f(X)$ of degree at most $k-1$, such that
\begin{equation}
\label{A_pol_der}
  \Xi(X) \coloneqq A_0(X)+A_1(X)f(X)+A_2(X)f'(X)+\cdots+A_s(X)f^{(s-1)}(X)=0
\end{equation}
where $A_i(X)=\sum\limits_{j=0}^{\text{deg}(A_i)}a_{ij}X^j$ for each $i$. We view (\ref{A_pol_der}) as a system of linear equations over $\F_q$ in the coefficients of $f(X)=\sum\limits_{i=0}^{k-1}f_iX^i\in\F_q[X]$, for which we note the following fact.

\begin{Fact}
\label{fact_aff_space_two}
\textit{The solutions $(f_0,f_1,\cdots,f_{k-1})$ of} (\ref{A_pol_der}) \textit{form an affine subspace of $\F_q^k$}.
\end{Fact}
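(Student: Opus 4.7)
The plan is to observe that equation (\ref{A_pol_der}) is an affine-linear constraint on the coefficient vector $(f_0, f_1, \ldots, f_{k-1}) \in \F_q^k$, so its solution set is either empty or a coset of a linear subspace, which is by definition an affine subspace.

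Concretely, I would define the map $T \colon \F_q^k \to \F_q[X]$ by
$$T(f_0, \ldots, f_{k-1}) \;:=\; \sum_{i=1}^{s} A_i(X)\, f^{(i-1)}(X), \qquad \text{where } f(X) = \sum_{j=0}^{k-1} f_j X^j,$$
and verify that $T$ is $\F_q$-linear. This reduces to three trivial observations: (i) the assignment $(f_0, \ldots, f_{k-1}) \mapsto f(X)$ is $\F_q$-linear; (ii) the formal derivative $D \colon \F_q[X] \to \F_q[X]$, and hence each iterate $D^{i-1}$, is $\F_q$-linear; (iii) multiplication by the fixed polynomial $A_i(X) \in \F_q[X]$ is $\F_q$-linear. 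Equation (\ref{A_pol_der}) then reads $T(f_0, \ldots, f_{k-1}) = -A_0(X)$, so the solution set is the preimage $T^{-1}\bigl(-A_0(X)\bigr)$, which is either empty or of the form $\vb + \ker(T)$ for any particular solution $\vb$; either way it is an affine subspace of $\F_q^k$.

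The statement is essentially a linear-algebra tautology, directly parallel to Fact \ref{fact_aff_space_one} (the only change is to replace $f(\gamma^{i-1}X)$ in the FRS setting by $f^{(i-1)}(X)$ here, which preserves $\F_q$-linearity in the $f_j$'s). There is no serious obstacle; the only content is the linearity of the formal derivative, which is immediate from the Leibniz rule applied termwise. In particular, the hypothesis char$(\F_q) > k$ from Definition \ref{Der_codes_def} plays no role in this particular fact, though it is needed elsewhere in the derivative-code decoding analysis.
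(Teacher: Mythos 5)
Your proof is correct; the paper states this as a Fact without supplying a proof, and your linear-algebraic argument is exactly the straightforward justification it implicitly relies on (the equation is affine-linear in $(f_0,\ldots,f_{k-1})$ because formal differentiation and multiplication by a fixed polynomial are $\F_q$-linear, so the solution set is a coset of $\ker T$ or empty). Your side remark that the hypothesis char$(\F_q)>k$ plays no role here is also accurate --- it is needed only later, in Lemma~\ref{dim_subsp_der_codes}, to ensure the coefficient of $f_{r+s-1}$ is nonzero.
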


$\ind$ The goal is almost identical to the one in \cref{LinAlg_sec}. That is, we want to bound the dimension of the affine subspace of solutions of (\ref{A_pol_der}) by exposing its structure, and then use this to efficiently find an explicit basis. For this, it suffices to give an algorithm in the case that the constant term $a_{s0}$ of $A_s(X)$ is nonzero (\cite{GW11} lemma 5), and we can then use lemma \ref{dim_subsp_der_codes}.

\begin{Lemma}
\label{dim_subsp_der_codes}
\textit{If $a_{s0}\neq0$, the affine solution space of (\ref{A_pol_der}) has dimension at most $s-1$}.
\end{Lemma}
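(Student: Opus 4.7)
The plan is to mimic the structural argument from Lemma \ref{aff_subsp_lem} in the FRS setting, but adapted to the fact that derivatives shift indices rather than rescale by powers of $\gamma$. Instead of extracting the coefficient of $X^r$ from $\Xi(X)$, I would examine the coefficient of $X^{r-s+1}$ for each $r \geq s-1$: this is the lowest power of $X$ in $\Xi(X)=0$ that $f_r$ can influence, and turns out to be the one where $f_r$ couples only to the assumed-nonzero constant term $a_{s,0}$.

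Concretely, expanding $A_i(X) f^{(i-1)}(X) = \sum_j a_{i,j} X^j \cdot \sum_\ell \tfrac{\ell!}{(\ell-i+1)!} f_\ell X^{\ell-i+1}$, the variable $f_{r'}$ enters the coefficient of $X^p$ in $A_i(X) f^{(i-1)}(X)$ only when $j = p - r' + i - 1 \geq 0$, i.e.\ $r' \leq p + i - 1$. Setting $p = r-s+1$, this bound becomes $r' \leq r - s + i \leq r$ for $i \leq s$, so none of $f_{r+1},f_{r+2},\ldots$ can show up in the equation. Moreover, $f_r$ itself appears only in the $i=s,\ j=0$ slot of $A_s f^{(s-1)}$, contributing
$$a_{s,0}\cdot\frac{r!}{(r-s+1)!}\cdot f_r \;=\; a_{s,0}\cdot r(r-1)\cdots(r-s+2)\cdot f_r.$$
Because $\mathrm{char}(\F_q) > k > r$, every factor in this falling factorial is nonzero in $\F_q$, and with $a_{s,0}\neq 0$ by hypothesis the overall coefficient of $f_r$ is nonzero. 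Rearranging the $X^{r-s+1}$ equation then expresses $f_r$ as an explicit affine function of $f_0,\ldots,f_{r-1}$ (together with the constant contribution $a_{0,r-s+1}$ coming from $A_0$).

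Iterating the recursion for $r = s-1, s, \ldots, k-1$ shows that $f_{s-1},\ldots,f_{k-1}$ are each uniquely determined once $f_0,\ldots,f_{s-2}$ are fixed. Combined with Fact \ref{fact_aff_space_two}, which already tells us the solution set is an affine subspace of $\F_q^k$, this exhibits the subspace as parametrized by at most $s-1$ free scalars, so its dimension is at most $s-1$. The main obstacle I anticipate is the index bookkeeping---verifying that $f_{r+1},f_{r+2},\ldots$ genuinely cannot sneak into the $X^{r-s+1}$ coefficient and that the coefficient of $f_r$ separates cleanly as $a_{s,0}\cdot r^{\underline{s-1}}$. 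Once this setup is in place, the conclusion is immediate from the nonvanishing of the falling factorial in characteristic exceeding $k$, exactly as the nonvanishing of $B(\gamma^r)$ for most $r$ drove the FRS argument.
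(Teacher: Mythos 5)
Your proposal is correct and takes essentially the same approach as the paper: the paper examines the coefficient $\xi_r$ of $X^r$ in $\Xi(X)$ and isolates $f_{r+s-1}$ with coefficient $a_{s0}\cdot\frac{(r+s-1)!}{r!}$, whereas you look at the coefficient of $X^{r-s+1}$ and isolate $f_r$ with coefficient $a_{s,0}\cdot\frac{r!}{(r-s+1)!}$, which is the same computation after the change of running index $r \mapsto r+s-1$. Both proofs then conclude identically from the nonvanishing of the falling factorial in characteristic greater than $k$.
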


\begin{proof}
The proof idea is parallel to that of lemma \ref{aff_subsp_lem}. The coefficients of $X^r$ of $\Xi(X)$ equals
\begin{align*}
  \xi_r = a_{0r} &+ \big(a_{10}\cdot f_r+a_{11}\cdot f_{r-1}+\cdots+a_{1r}\cdot f_{0}\big)+\big(a_{20}\cdot (r+1)\cdot f_{r+1}+a_{21}\cdot r\cdot f_{r}+\cdots+a_{2r}\cdot f_1\big)+\\
  &+ \cdots + \left(a_{s0}\cdot\frac{(r+s-1)!}{r!}\cdot f_{r+s-1}+\cdots+a_{r1}\cdot(s-1)!\cdot f_{s-1}\right)\\
  = a_{0r} &+ \sum_{j=1}^s\sum_{k=0}^r\frac{(k+j-1)!}{k!}\cdot a_{j(r-k)}\cdot f_{k+j-1} .
\end{align*}
If $(f_0,\cdots, f_{k-1})$ is a solution to $\Xi(X)$, then $\xi_r=0$ for every $r$. For each $r$; $\xi_r$ depends only on $f_j$ for $j<r+s$, and the coefficient of $f_{r+s-1}$ is
$$ a_{s0}\cdot(r+s-1)\cdot(r+s-2)\cdots(r+1) = a_{s0}\cdot\left(\prod_{l=r+1}^{r+s-1}l\right) = a_{s0}\cdot\frac{(r+s-1)!}{r!}. $$
By the assumption that char$(\F_q)>k$, it follows that this coefficient is nonzero when $r+s\leq k$. Hence, if we fix $\{f_i\}_{i=0}^{s-2}$, the rest of the coefficients $\{f_i\}_{i=s-1}^{k-1}$ are uniquely determined. This implies that the dimension of the solution space is at most $s-1$.
\end{proof}

$\ind$ What we showed implies the main result of \cite{GW11} (its theorem 6 and corollary 7), which for parameters $s\simeq1/\varepsilon,m\simeq s^2$ suggests that $\Der_q^{(m)}[n,k]$ of rate at least $R$ for $R\in(0,1)$, can be list decoded from a fraction of $1-R-\varepsilon$ of errors, with a list-size of $q^{O(1/\varepsilon)}$. Lastly, there is potential for improving the large list-size of derivative codes, by drawing codewords from subspace-evasive sets.

\section{Concluding Remarks}
\label{Concl_rmks_sec}

In this survey we first saw how the gap was closed for the optimal trade-off between rate and error-correction capability for list decoding algorithmically, through folded Reed-Solomon codes. We then showed several ways in which this can be achieved, using very different approaches, but at their core same ideas, and attaining same results. This is not just impressive, but also important; as different point of views may clear any ambiguity and make things easier to interpret and understand. We also looked into local self-correction and local decodability of multiplicity codes.\\
$\ind$ Two list decoding algorithms which were not discussed, are the list decoding of Parvaresh-Vardy (PV) codes \cite{PV05} which has decoding radius $1-\sqrt[M+1]{M^MR^M}$ for an arbitrary parameter $M\in\Z_+$, and the list decoding of multiplicity codes \cite{Kop15} which achieves the list decoding capacity $1-R$. Chronologically, the first major breakthrough in this area was presented in \cite{Sud97} which had radius $1-2\sqrt{R}$, followed by the Guruswami-Sudan radius $1-\sqrt{R}$ \cite{GS98}, and then PV was a stepping stone between towards folded Reed-Solomon codes. The main ideas in all these achievements come from the seminal paper of Sudan, and a lot of what we presented relates to the construction of PV codes; in which certain powers of the evaluations of the polynomial $f(X)$ are being bundled together. These codes have also been used in other applications, e.g. randomness extractors.\\
$\ind$ There are a lot more articles in this (general) area which were not discussed. We only bring to your attention two such articles. In \cite{Gur09}, the folding operation was extended to certain algebraic-geometry codes, which contain FRS is a special case. These codes are referred to as \textit{folded cyclotomic codes}. The second is \cite{HN09}, which uses a similar approach to what was discussed in \cref{HL_sec}, for folded versions of algebraic-geometric codes.


\bibliographystyle{alpha}
\bibliography{refs.bib}

\appendix

\section{Digression Into Number Theory --- $p$-adic numbers}
\label{NT_subs}

We digress from coding theory in this appendix to further discuss Hensel's lemma from \cref{HL_sec}, and its connection to the $p$-adic numbers $\Q_p$. In 1897, Kurt Hensel himself introduced the field of $p$-adic numbers $\Q_p$, which have been thoroughly studied throughout the $20^{th}$ century and are still are active research area, though they were foreshadowed in Ernst Kummer's work a few decades earlier. The first major breakthrough involving $p$-adic numbers is the \textit{Hasse–Minkowski} theorem, which can be used to test efficiently whether a Quadratic form has a solution in $\Q$. In the literature, there are also examples of codes over the $p$-adic integers and numbers; e.g. \cite{CS95},\cite{DHP06}.  \\
$\ind$ Vaguely speaking, they allow the use of analytic methods in the study of Diophantine equations, number theory, arithmetic geometry and more recently, numerical analysis \cite{Car17}. After all, Hensel's main motivation was the analogy between the unique factorization domain (UFD) $\Z$ along with its field of fractions $\Q$, and the UFD $\C[X]$ along with its field of fractions $\C(X)$. Essentially, $p\in\Z$ are analogous to the (irreducible) polynomials $(X-a)\in\C[X]$ \cite{Ogg14}. Here is a definition of the $p$-adic integers $\Z_p$, and two definitions of the $p$-adic numbers $\Q_p$; an algebraic \ref{def_Qp_1} and an analytic \ref{def_Qp_2} (which may be viewed as a theorem). By $p$ we indicate a fixed prime.

\begin{Def}
\label{def_Zp}
\textit{A \textbf{$p$-adic integer} is a formal sum $\alpha=\sum\limits_{i=0}^{\infty}a_ip^i$, for integers $0\leq a_i<p$. The set of $p$-adic integers $\Z_p$, forms a commutative ring. We can alternatively write $\alpha=\cdots a_i\cdots a_2a_1a_0$}.
\end{Def}

\begin{Def}
\label{def_Qp_1}
\textit{The \textbf{$p$-adic numbers} are the series of the form
$$ a_{-n}\frac{1}{p^n}+a_{-n+1}\frac{1}{p^{n-1}}+\cdots+a_{-1}\frac{1}{p}+a_0+a_1p+\cdots $$
which form the field we denoted by $\Q_p$. Furthermore $\Q\subseteq\Q_p$, and if $\alpha\in\Q_p$, then $\exists N\geq0$ such that $p^N\alpha\in\Z_p$. In other words, $\Q$ may be viewed as a subfield of $\Q_p$}.
\end{Def}

\begin{Def}
\label{def_p_adic_val}
\textit{Let $\alpha\in\Q^{\times}$ where $\alpha=p^k\frac{g}{h}$ for $k\in\Z$, and $p,g,h$ coprime. The \textbf{$p$-adic valuation} of $\alpha$ is} $\text{ord}_p(\alpha)=k$ \textit{and its \textbf{$p$-adic absolute value} is $\nu_p(\alpha)=p^{-k}$, which is a non-Archimedean metric; as $\nu_p(\beta+\gamma)\leq\max\left\{\nu_p(\beta),\nu_p(\gamma)\right\}$ for $\beta,\gamma\in\Q$. By convention} $\text{ord}_p(0)=\infty$ \textit{and $\nu_p(0)=0$}.
\end{Def}

\begin{Def}
\label{def_Qp_2}
\textit{The field of \textbf{$p$-adic numbers} $\Q_p$ is the completion of $\Q$ with respect to the metric induced by $\nu_p(\cdot)$, i.e. every Cauchy sequence converges. Moreover, $\Q$ is dense in $\Q_p$ (as is $\Z$ in $\Z_p$)}.
\end{Def}

$\ind$ By definition $\Q_p=\Z_p[\frac{1}{p}]$, and it is the fraction field of $\Z_p$. Another definition which resembles Hensel's lemma, is defined through the ring homomorphism
\begin{align*}
  \pi_n \ : \quad \Z_p \qquad &\longrightarrow \ \Z/p^n\Z \\
  \sum\limits_{i=0}^{\infty}a_ip^i \ &\longmapsto \ \left(\sum\limits_{i=0}^{n-1}a_ip^i\right) \bmod p^n
\end{align*}
for which $\pi_{n+1}(\alpha)\equiv\pi_n(\alpha)\bmod p^n$. This definition uses the \textit{projective/inverse limit}, and is not relevant to what we want to show. We want to demonstrate the resemblance with Hensel's lemma.

\begin{Lemma}[Basic version \cite{ConHL}]
\label{HL_basic}  
\textit{If $f(X)\in\Z_p[X]$ and $a\in\Z_p$ satisfies $f(a)\equiv0\bmod p$ and $f'(a)\not\eq0\bmod p$, then there exists a unique $\alpha\in\Z_p$ such that $f(\alpha)=0$ and $\alpha\equiv a\bmod p$}.
\end{Lemma}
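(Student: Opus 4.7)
The plan is to use Newton-style iterative lifting, which is essentially an infinite iteration of the one-step lifting appearing in Lemma \ref{Hens_lem_simple}, together with completeness of $\Z_p$ (Definition \ref{def_Qp_2}). I would inductively construct a sequence $\{a_n\}_{n\geq 1}\subset\Z_p$ with $a_1=a$, such that for each $n$: (i) $a_n\equiv a\pmod p$, (ii) $f(a_n)\equiv 0\pmod{p^n}$, and (iii) $a_{n+1}\equiv a_n\pmod{p^n}$. Conditions (i) and (iii) together will make the sequence Cauchy in the $p$-adic metric, and condition (ii) will force its limit to be a root.

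For the inductive step, write $a_{n+1}=a_n+t\,p^n$ with $t\in\Z_p$ to be determined, and Taylor-expand:
$$ f(a_{n+1}) \;=\; f(a_n)+f'(a_n)\,t\,p^n+(tp^n)^2\cdot h(a_n,t) $$
for some $h\in\Z_p[X,Y]$; the remainder manifestly lies in $p^{2n}\Z_p$. Writing $f(a_n)=p^n c$ with $c\in\Z_p$, dividing by $p^n$, and reducing modulo $p$ gives the congruence $c+f'(a_n)\,t\equiv 0\pmod p$. Since $a_n\equiv a\pmod p$ we have $f'(a_n)\equiv f'(a)\not\equiv 0\pmod p$, so $f'(a_n)$ is a unit in $\F_p$ and $t$ is uniquely determined modulo $p$; this gives a unique $a_{n+1}\bmod p^{n+1}$ satisfying the required properties.

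Next I would take the limit. By (iii), $\nu_p(a_{n+1}-a_n)\leq p^{-n}$, so $\{a_n\}$ is Cauchy; completeness of $\Z_p$ yields $\alpha:=\lim a_n\in\Z_p$ with $\alpha\equiv a\pmod p$. Continuity of polynomial evaluation in the $p$-adic metric gives $f(\alpha)=\lim f(a_n)=0$ because $\nu_p(f(a_n))\leq p^{-n}\to 0$. For uniqueness, if $\alpha,\beta\in\Z_p$ are both roots congruent to $a$ modulo $p$, Taylor expansion yields
$$ 0 \;=\; f(\beta)-f(\alpha) \;=\; (\beta-\alpha)\bigl(f'(\alpha)+(\beta-\alpha)\,H(\alpha,\beta)\bigr) $$
for some $H\in\Z_p[X,Y]$. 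Since $f'(\alpha)$ is a unit in $\Z_p$ and $(\beta-\alpha)H(\alpha,\beta)\in p\Z_p$, the second factor is a unit, forcing $\beta=\alpha$.

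The main obstacle I anticipate is not conceptual but bookkeeping: verifying cleanly that the Taylor remainder in the inductive step really lies in $p^{2n}\Z_p$ (which follows from a binomial expansion of $f(X+Y)$ together with the fact that $f$ has $\Z_p$-coefficients), and then confirming that the uniqueness of $t$ modulo $p$ at each step—rather than in all of $\Z_p$—is exactly what is needed to make $a_{n+1}$ well-defined modulo $p^{n+1}$ while leaving the construction compatible with subsequent lifts. Once that is pinned down, completeness of $\Z_p$ does the rest.
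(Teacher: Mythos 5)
Your proof is correct, and it is exactly the route the paper gestures at when it remarks that the ``simplest version'' (Lemma \ref{Hens_lem_simple}) and Lemma \ref{HL_basic} ``are in fact the same'': you iterate the one-step lift $p^n\to p^{n+1}$, obtain a Cauchy sequence by the ultrametric inequality, and pass to the limit using completeness of $\Z_p$, with uniqueness via the factorization $f(\beta)-f(\alpha)=(\beta-\alpha)\bigl(f'(\alpha)+(\beta-\alpha)H(\alpha,\beta)\bigr)$ in which the second factor is a unit. Worth noting only as a variant: the paper, for the stronger version \ref{HL_stronger}, instead invokes the genuine Newton recurrence $x_{i+1}=x_i-f(x_i)/f'(x_i)$, which converges quadratically (doubling precision each step) but requires inverting $f'(x_i)$ in $\Z_p$; your one-digit-per-step construction converges only linearly yet is more elementary, needing $f'$ inverted only modulo $p$, and either route establishes the basic lemma.
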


\begin{Thm}[Stronger version \cite{ConHL}]
\label{HL_stronger}  
\textit{Let $f(X)\in\Z_p[X]$ and $a\in\Z_p$ satisfy $\nu_p(f(a))<\nu_p(f'(a))^2$. Then, there is a unique $\alpha\in\Z_p$ such that $f(\alpha)=0$ and $\nu_p(\alpha-a)<\nu_p(f'(a))$. Moreover:}
$$ (1) \ \ \nu_p(\alpha-a) = \nu_p\big(f(a)/f'(a)\big)<\nu_p(f'(a)) \qquad 
\text{ and } \qquad (2) \ \ \nu_p(f'(\alpha))=\nu_p(f'(a)). $$
\end{Thm}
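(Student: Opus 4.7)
The plan is to run the $p$-adic analogue of Newton's method and control it using the ultrametric inequality. I would set $a_0 = a$ and inductively define $a_{n+1} = a_n - f(a_n)/f'(a_n)$, then track two quantities through the iteration: $\nu_p(f(a_n))$ and $\nu_p(f'(a_n))$. Let $\varepsilon = \nu_p(f(a))/\nu_p(f'(a))^2$, which by hypothesis lies in $(0,1)$. The target claim to establish by induction is that $\nu_p(f'(a_n)) = \nu_p(f'(a))$ and $\nu_p(f(a_n))/\nu_p(f'(a_n))^2 \leq \varepsilon^{2^n}$ for every $n \geq 0$.

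The single algebraic input I need is the Taylor identity in $\Z_p[X]$: for any $g \in \Z_p[X]$ and $x,h \in \Z_p$,
$$ g(x+h) = g(x) + g'(x)\, h + h^2 \, r_g(x,h) $$
for some $r_g(x,h) \in \Z_p$ (the coefficients of $g$ scaled by integer binomials stay in $\Z_p$). Applying this to $f$ and to $f'$ at $x = a_n$, $h = -f(a_n)/f'(a_n)$, the Newton step kills the first two terms of $f$, giving $f(a_{n+1}) = h^2 r_f$ and therefore $\nu_p(f(a_{n+1})) \leq \nu_p(h)^2$. Simultaneously $f'(a_{n+1}) = f'(a_n) + h\, r_{f'}$, and because the inductive hypothesis forces $\nu_p(h) = \nu_p(f(a_n))/\nu_p(f'(a_n)) \leq \varepsilon^{2^n}\, \nu_p(f'(a)) < \nu_p(f'(a_n))$, the ultrametric identity for summands of unequal valuation gives $\nu_p(f'(a_{n+1})) = \nu_p(f'(a_n))$. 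Combining the two estimates closes the induction.

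From here everything is fast. The bound $\nu_p(a_{n+1} - a_n) \leq \varepsilon^{2^n}\, \nu_p(f'(a))$ shows that $(a_n)$ is Cauchy, so by completeness of $\Z_p$ (Definition \ref{def_Qp_2}) it converges to some $\alpha \in \Z_p$ with $f(\alpha) = 0$. Since $\nu_p(a_1 - a_0) = \varepsilon \cdot \nu_p(f'(a))$ strictly dominates $\nu_p(a_{n+1} - a_n)$ for every $n \geq 1$, the strict form of the ultrametric inequality yields $\nu_p(\alpha - a) = \nu_p(a_1 - a_0) = \nu_p(f(a)/f'(a))$, which is claim (1). The preserved equality $\nu_p(f'(a_n)) = \nu_p(f'(a))$ together with continuity of $f'$ then gives (2). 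For uniqueness, suppose $\alpha'$ is another root with $\nu_p(\alpha' - a) < \nu_p(f'(a))$, so $\nu_p(\alpha' - \alpha) < \nu_p(f'(a))$; expanding $0 = f(\alpha')$ around $\alpha$ via the Taylor identity forces $f'(\alpha)(\alpha'-\alpha) = -(\alpha'-\alpha)^2 t$ for some $t \in \Z_p$, and dividing out $\alpha'-\alpha$ (if nonzero) contradicts $\nu_p(f'(\alpha)) = \nu_p(f'(a))$.

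The main obstacle I anticipate is the careful bookkeeping required to propagate the strict inequality $\nu_p(f(a)) < \nu_p(f'(a))^2$ through the iteration, specifically maintaining the sharp equality $\nu_p(f'(a_{n+1})) = \nu_p(f'(a_n))$ rather than a mere $\leq$ at each step, since both the quadratic convergence and the explicit bounds in (1) and (2) hinge on that non-degeneracy. Once this invariant is set up cleanly, the ultrametric inequality does the heavy lifting, turning what would otherwise be delicate real-analytic estimates into one-line consequences.
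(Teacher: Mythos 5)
Your proposal is correct, and it follows exactly the approach the paper indicates in the paragraph immediately after the theorem statement: the Newton recurrence $x_0=a$, $x_{i+1}=x_i-f(x_i)/f'(x_i)$, with the ultrametric inequality driving quadratic convergence. The paper itself does not spell out the estimates (it defers to \cite{ConHL} and \cite{Car17}), so your write-up is a fleshed-out version of the cited argument; the only cosmetic note is that when bounding $f'(a_{n+1})-f'(a_n)$ you only need the first-order form $g(x+h)=g(x)+h\,s_g(x,h)$ with $s_g\in\Z_p$, not the quadratic Taylor identity you stated, and you should record that $\nu_p(f(a_n)/f'(a_n))\leq\varepsilon^{2^n}\nu_p(f'(a))<1$ so that each iterate stays in $\Z_p$.
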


$\ind$ One can restate the above theorem in a way which gives a construction of the $\alpha\in\Z_p$ \cite{Car17}. The striking part about this statement (and the construction of $\alpha$), is that the proof applies Newton's method; establishing connections now to numerical analysis. This is a (approximate) root-finding algorithm, which takes us back to \cref{root_find_subsec}. The remarkable thing about Newton's method is that it extends almost word for word to Hensel's lemma \ref{HL_stronger}, when $\R$ is replaced by $\Q_p$. More precisely, under the assumptions of lemma \ref{HL_stronger} we
construct the sequence ${(x_i)}_{i\in\N_0}$ by the recurrence $x_0=a$; $x_{i+1}=x_i-f(x_i)/f'(x_i)$, which converges to $\alpha\in\Z_p$ with $f(\alpha)=0$.\\
$\ind$ The $p$-adics are relatively hard to grasp and understand, though they have ``simple'' constructions (e.g. lemma \ref{Hens_lem_simple}). Part of the reason is that there are many ways to interpret them, as we have seen. We briefly discuss a final more visual representation of $\Z_p$, which is more meaningful and convenient geometrically. From definition \ref{def_Zp}, it is clear that any $\alpha\in\Z_p$ can be decomposed in base $p$. We can then construct a tree with $p$ branches at each node (a \textit{full $p$-ary tree}), with each branch corresponding to an integer coefficient $0\leq a_i<p$, and nodes at the same \textit{height} $h$ correspond to elements of the congruence class $\bmod p^{h+1}$ (height here corresponds to the depth of the tree). Where Where does Hensel's lemma come into play? We point out that definitions \ref{Hens_lem_simple} and \ref{HL_basic} are in fact the same, with the latter stated in a more abstract way.\\
$\ind$ Name anything $p$-adic and most likely it has already been well-defined and studied extensively, from $p$-adic differential equations \cite{Ked10}, to $p$-adic modular forms \cite{Gou06} and $p$-adic $\zeta$-functions \cite{Kob12}. The most common use of $p$-adics though, is probably in the study of elliptic curves. This is where they appear in the solution of one of the most important problems in mathematics, Fermat's last theorem (specifically, the proof of the modularity conjecture for semistable elliptic curves). As a humbled mathematician said twenty-six years ago, `I think I'll stop here'.


\end{document}